\newtheorem{rem}{Remark}
\newtheorem{theorem}{Theorem}
\newtheorem{lemma}[theorem]{Lemma}
\newtheorem{open}{Open Problem}
\newtheorem{example}{Example}
\newtheorem{conj}[theorem]{Conjecture}
\newcommand{\tr}{{\mathrm{Tr}}}
\newcommand{\gf}{{\mathrm{GF}}}
\newcommand{\cP}{{\mathcal{P}}}
\newcommand{\cB}{{\mathcal{B}}}
\newcommand{\C}{{\mathcal{C}}}
\newcommand{\bD}{{\mathbb{D}}}
\begin{document}

\begin{frontmatter}



\title{ Codes and Pseudo-Geometric Designs from the Ternary $m$-Sequences with Welch-type decimation $d=2\cdot 3^{(n-1)/2}+1$
\tnotetext[fn1]{C. Xiang's research was supported by
the National Natural Science Foundation of China under grant numbers 12171162 and 11971175; C. Tang's research was supported by the National Natural Science Foundation of China under grant number 12231015, the Sichuan Provincial Youth Science and Technology Fund under grant number 2022JDJQ0041 and the Innovation Team Funds of China West Normal University under grant number KCXTD2022-5; H. Yan's research was supported by the Natural Science Foundation of Sichuan Province under grant number 2022NSFSC1805 and the
Fundamental Research Funds for the Central Universities of China under grant number 2682023ZTPY002.}}


\author[cx]{Can Xiang}
\address[cx]{College of Mathematics and Informatics, South China Agricultural University, Guangzhou, Guangdong 510642, China}
\ead{cxiangcxiang@hotmail.com}
\author[cmt]{Chunming Tang}
\address[cmt]{School of Information Science and Technology, Southwest Jiaotong University, Chengdu 610031, China}
\ead{tangchunmingmath@163.com}
\author[hdy]{Haode Yan}
\address[hdy]{School of Mathematics, Southwest Jiaotong University, Chengdu 610031, China}
\ead{hdyan@swjtu.edu.cn}
\author[mg]{Min Guo}
\address[mg]{School of Mathematics and Information,China West Normal University,Nanchong, Sichuan 637002, China}
\ead{1289042385@qq.com}



\begin{abstract}
Pseudo-geometric designs are combinatorial designs which share the same parameters as
a finite geometry design, but which are not isomorphic to that design.
As far as we know, many pseudo-geometric designs have been constructed by the methods of finite geometries and combinatorics. However, none of pseudo-geometric designs with the parameters $S\left (2, q+1,(q^n-1)/(q-1)\right )$ is constructed by the approach of coding theory. In this paper, we use cyclic codes to construct pseudo-geometric designs. We firstly present a family of ternary cyclic codes from the $m$-sequences with Welch-type decimation $d=2\cdot 3^{(n-1)/2}+1$, and obtain some infinite family of 2-designs and a family
of Steiner systems $S\left (2, 4,  (3^n-1)/2\right )$ using these cyclic codes and their duals. Moreover, the parameters of these cyclic codes and their shortened codes are also determined. Some of those ternary codes are optimal or almost optimal. Finally, we show that one of these obtained Steiner systems is inequivalent to the point-line design of the projective space $\mathrm{PG}(n-1,3)$ and thus is a pseudo-geometric design.



\end{abstract}

\begin{keyword}
Linear codes, Cyclic codes, Sequences, $t$-designs, Steiner systems

\MSC 51E21 \sep 94B05 \sep 51E22

\end{keyword}

\end{frontmatter}

\section{Introduction}\label{sec-int}

Let $p$ be a prime and $q = p^m$ for some positive integer $m$. Let $\gf(q)$ be the finite field
of cardinality $q$. A $[v,\, k,\,d]$ linear code $\C$ over $\gf(q)$ is a $k$-dimensional subspace of $\gf(q)^v$ with minimum (Hamming) distance $d$.
A $[v,\, k,\,d]$ linear code $\C$ is said to be {\em cyclic} if
$(c_0,c_1, \cdots, c_{v-1}) \in \C$ implies $(c_{v-1}, c_0, c_1, \cdots, c_{v-2}) \in \C$. A $[v,\, k,\,d]$ linear code $\C$ is said to be
\emph{distance-optimal}, \emph{dimension-optimal} and \emph{length-optimal} if there is no $[v,k,d+1]$ code exists, no $[v,k+1,d]$ code exists and no $[v',k,d]$ code exists with $v' < v$, respectively. A code is said to be \emph{optimal} if it is distance-optimal, or dimension-optimal, or length-optimal, or meets a
bound for linear codes. A$[v,k,d]$ code is said to be \emph{almost optimal} if an $[v,k+1,d]$, or $[v,k,d+1]$, or $[v-1,k,d]$ code is optimal.


It is known that there are several approaches to constructing linear codes over finite fields (see, for example, \cite{ding2018c,FLZ2016,LDT2020,lisihem2020,sihem2017}).
One is a sequence approach and it has been intensively investigated for constructing cyclic codes which are a subclass of linear codes. All cyclic codes over finite fields can be produced by using the sequence approach. Ding employed the two-prime sequences and the cyclotomic sequences of order four to construct
cyclic codes from this sequence approach in \cite{ding20120} and \cite{ding20130} respectively. The lengths and dimensions of
the cyclic codes obtained in \cite{ding20120,ding20130} were determined and some lower bounds on their minimal distances
were also developed. Ding also showed that the obtained cyclic codes are very good in general, and some of them are (almost) optimal. In addition, sequences with nice property can be obtained from some cryptographic functions over finite
fields, and it seems that researchers prefer to use sequences associated with cryptographic functions
to construct cyclic codes (see, for example, \cite{ding2012,dingzhou2014,TYX2014}), since the cryptographic properties of the employed functions can be used to
determine the parameters of the cyclic codes. The reader is referred to \cite{ding2018c,lisihem2020} for further information on the
sequence construction of cyclic codes.

The sequence generated by the non-degenerate linear feedback shift register with the largest period is called the \emph{$m$-sequence}. As special sequences, $m$-sequences have wide
applications in constructing linear codes and the parameters of these codes can be obtained from the properties of the cross correlation function between two $m$-sequences.
Thus, some works focus on the determination of
the property of the cross correlation function between two $m$-sequences generated by different characteristic polynomials (see, for example, \cite{CCD,CD,Gold,H1976,HX,Ka,Tra,xia2014}).
However, it is very difficult to determine the value distribution of the cross correlation function in general. In fact, only a small number of $p$-ary $m$-sequences with few-valued
cross-correlation function for $p$ odd prime are reported in the literature. Note that for $p$ odd prime, the three-valued distribution of the cross correlation function from some $p$-ary $m$-sequences $\{s(t)\}$ with period $p^n-1$ and different decimation $d$ were determined and listed
as follows:
\begin{enumerate}
  \item [(\uppercase\expandafter{\romannumeral1})]  $\frac{n}{s}\geq 3$ is odd, $d=(p^{3t}+1)/(p^t+1)$ \cite{H1976};
  \item [(\uppercase\expandafter{\romannumeral2})]  $\frac{n}{s}\geq 3$ is odd, $d=(p^{2t}+1)/2$ \cite{H1976};
  \item [(\uppercase\expandafter{\romannumeral3})] $p=3$, $n$ is odd, $d=2\cdot 3^\frac{n-1}{2}+1$ \cite{Dob2001};
\end{enumerate}
where $s=\gcd(t,n)$ and $\gcd(d,p^n-1)=1$.
Specially, Dobbertin et al. (i.e., see the above case (\uppercase\expandafter{\romannumeral3}) ) showed that the cross correlation function between two
ternary $m$-Sequences with period $3^n-1$ and Welch-type decimation $d=2\cdot 3^m+1$ was three-Valued and preferred, where $n=2m+1$ and $q=3^n$. By the property of this cross correlation function, the parameters of the corresponding code
\begin{eqnarray}\label{c0}
\{(\tr(ax+bx^d))_{x\in \gf(3^n)}|a,b\in \gf(3^n)\}
\end{eqnarray}
were also determined, where  $\tr$ is the trace function from $\gf(3^n)$ to $\gf(3)$. This code has parameters $[q, 2n, 2(3^{n-1}-3^m)]$ and three nonzero weights, i.e., $2(3^{n-1}-3^m)$,  $ 2\cdot 3^{n-1}$ and $2(3^{n-1}+3^m)$.
In this paper, we will consider a class of cyclic codes
\begin{eqnarray}\label{c1}
\widetilde{\C_{n}}=\{(\tr(a \alpha^{2i}+b\alpha^{2di}))_{i=0}^{(3^n-3)/2}|a,b\in \gf(3^n)\}
\end{eqnarray}
from the above ternary codes associated with the ternary $m$-Sequences,
where $\alpha$ is a primitive element of $\mathrm{GF(3^n)}$ and $\tr$ is the trace function from $\gf(3^n)$ to $\gf(3)$. The first objective of this paper is to show that the code $\widetilde{\C_{n}}$ and its dual code hold $2$-designs and determine their parameters. Moreover, the supports of the minimum weight codewords in the dual of $\widetilde{\C_{n}}$ form a Steiner system.


It is well known that the number of non-isomorphic designs having the same parameters as $\mathrm{PG}_1(n,q)$ grows exponentially with linear growth of $n$ \cite{Jungnickel-recent,Wilson1975}. In 2010, Jungnickel \cite{Jungnickel2010} showed that every permutation of a hyperplane of $\mathrm{PG}_1(n-1,q)$ gives rise to a design with the same parameters as $\mathrm{PG}_1(n-1,q)$. However, none of pseudo-geometric designs with the same parameters as $\mathrm{PG}_1(n,q)$ is constructed by using the method of coding theory. The second objective of this paper is to show that the Steiner system $S(2,4,121)$ constructed by using this method is a pseudo-geometric design. It will provide theories support for constructing pseudo-geometric designs by using coding theory.

The rest of this paper is arranged as follows. Section \ref{sec-pre} states some notation and results
about linear codes and combinatorial $t$-designs. The parameters of the ternary code $\widetilde{\C_{n}}$ and its dual are determined in Section \ref{sec-3code}.
Section \ref{sec-des2} gives some infinite families of $2$-designs and determines their parameters. We finally show that the Steiner system $S(2,4,121)$ presented in this paper
is inequivalent to the classic point-line design $\mathrm{PG}_1(4,3)$ of the projective space $\mathrm{PG}(4,3)$  in Section \ref{sec:pgc}. The conclusion of this paper is given in Section  \ref{sec-summary}.

\section{Preliminaries}\label{sec-pre}


In this section, some notation and basic facts are described and will be needed later.

\subsection{Some results of linear codes}

Let $\C$ be a $[v,k,d]$ linear code over $\gf(q)$. Let $A_i$ denote the number of codewords with Hamming weight $i$ in a code
$\C$ of length $v$. The weight enumerator of $\C$ is defined by
$
1+A_1z+A_2z^2+ \cdots + A_v z^v.
$
The sequence $(1,A_1,\ldots,A_v)$ is called the weight distribution of $\C$ and is an important research topic in coding theory,
as it contains crucial information about the error correcting capability of the code. Thus the study of the weight distribution
has attracted much attention in coding theory and much work focuses on the determination of
the weight distributions of linear codes (see, for example, \cite{ding2018,Ding16,sihem2020,sihem2017,zhou20131}). A code $\C$ is said to be a $t'$-weight code  if the number of nonzero $A_i$ in the sequence $(A_1, A_2, \cdots, A_v)$ is equal to $t'$.
Denote by $\C^\bot$  and $(A_0^{\perp}, A_1^{\perp}, \dots, A_\nu^{\perp})$ the dual code of a linear code $\C$ and its weight distribution, respectively.
The \emph{Pless power moments} \cite{HP10}
play an important role in calculating
the weight distributions of linear codes. The first five Pless power moments identities are given by
\begin{align}\label{eq:PPM}
 & \sum_{i=0}^\nu  A_i= q^k, \nonumber \\
 & \sum_{i=0}^\nu  i\cdot A_i= q^{k-1} (qv-v-A_1 ^\perp), \nonumber \\
 & \sum_{i=0}^\nu  i^2 \cdot A_i= q^{k-2} \left [(q-1)v(qv-v+1)-(2qv-q-2v+2)A_1^\perp +2 A_2^\perp \right], \nonumber \\
 & \sum_{i=0}^\nu  i^3 \cdot A_i= q^{k-3}[ (q-1)v (q^2 v^2 -2q v^2 + 3q v - q + v^2 - 3 v + 2) - (3q^2 v^2 -3q^2 v - 6q v^2  \nonumber \\
 &  ~~~~~~~~~~~~~~  + 12q v + q^2 - 6q + 3v^2 - 9v + 6) A_1^\perp  + 6(qv- q- v +2)A_2^\perp -6 A_3 ^\perp  ], \nonumber \\
 & \sum_{i=0}^\nu  i^4 \cdot A_i= q^{k-4}[  (q-1)v (q^3 v^3 -3q^2 v^3 +6q^2 v^2 -4q^2 v + q^2+3 q v^3 - 12q v^2 +15qv -6q-v^3  \nonumber \\
 &  ~~~~~~~~~~~~~~                        +6v^2-11v +6)  - ( 4q^3 v^3-6q^3v^2 + 4q^3 v -q^3 - 12q^2 v^3+ 36q^2 v^2 - 38q^2 v + 14q^2  \nonumber \\
 &   ~~~~~~~~~~~~~~  +12q v^3 - 54q v^2 + 78q v -36q - 4v^3 + 24v^2 - 44v+ 24               ) A_1^\perp       \nonumber \\
 &    ~~~~~~~~~~~~~~ +(12q^2v^2 -24q^2v + 14q^2 -24qv^2+ 84qv - 72q + 12v^2 -60v + 72) A_2^\perp \nonumber \\
 &  ~~~~~~~~~~~~~~ - (24qv - 36q -24v + 72)A_3^\perp +24 A_4 ^\perp    ].
 \end{align}



\subsection{Combinatorial t-designs and some related results}

Let $k$, $t$ and $v$ be positive integers with $1 \leq t \leq k \leq  v$. Let $\cP$ be a set with $v$ elements and $\cB$ be a set of some $k$-subsets of
$\cP$. $\cB$  is called the point set and  $\cP$ is called the block set in general. The incidence structure
$\bD = (\cP, \cB)$ is called a $t$-$(v, k, \lambda)$ {\em design\index{design}} (or {\em $t$-design\index{$t$-design}}) if each $t$-subset of $\cP$ is contained in exactly $\lambda$ blocks of
$\cB$.
Let $\binom{\cP}{k}$ denote the set consisting of all $k$-subsets of the point set $\cP$. Then the incidence structure $(\cP, \binom{\cP}{k})$ is a $k$-$(v, k, 1)$ design and is called a \emph{complete design}. The special incidence structure $(\cP, \emptyset)$ is called a $t$-$(v, k, 0)$ trivial design
for all $t$  and $k$ . A combinatorial $t$-design is said to be {\em simple\index{simple}} if its block set $\cB$ does not have
a repeated block. When $t \geq 2$ and $\lambda=1$, a $t$-$(v,k,\lambda)$ design is called  a
{\em Steiner system\index{Steiner system}}
and denoted by $S(t,k, v)$. The parameters of a combinatorial $t$-$(v, k, \lambda)$ design must satisfy the following equation:
\begin{eqnarray}\label{eq:bb}
b  =\lambda \frac{\binom{v}{t}}{\binom{k}{t}}
\end{eqnarray}
where $b$ is the cardinality of $\cB$.

Let $(\mathcal P, \mathcal B)$ and $(\mathcal P', \mathcal B')$ be two $t$-designs.
An isomorphism from $(\mathcal P, \mathcal B)$ to
$(\mathcal P', \mathcal B')$ is a $1$-$1$ mapping $\sigma: \mathcal P \rightarrow \mathcal P'$ such that $\widetilde{\sigma}(\mathcal P)= \mathcal P'$ (where $\widetilde{\sigma}$ is the mapping induced on blocks by $\sigma$).
Isomorphism is an equivalence relation, and the $t$-designs $(\mathcal P, \mathcal B)$ and $(\mathcal P', \mathcal B')$ are isomorphic. An automorphism of $(\mathcal P, \mathcal B)$ is an isomorphism of $(\mathcal P, \mathcal B)$ to itself.
The set of all such automorphisms forms a group under composition
called the automorphism group of the $t$-design. A $t$-$(v, k, \lambda)$ design is cyclic if it has an automorphism consisting of a single cycle of length $v$
.


It is well known that $t$-designs and linear codes  are interactive with each other.
A $t$-design $\mathbb  D=(\mathcal P, \mathcal B)$ can be used to construct a linear code over GF($q$) for any $q$ (see, for example, \cite{Dingt20201,ton1,ton2}).
Meanwhile, a linear code $\C$ may produce a $t$-design which is formed by supports of codewords of a fixed Hamming weight in $\C$. Let $\nu$ be the length of $\mathcal C$ and the set of the coordinates of codewords in $\mathcal C$ is denoted by $\mathcal P(\mathcal C)=\{0,1, 2, \dots, \nu-1\}$. The \emph{support} of  $\mathbf c$
is defined by
\begin{align*}
\mathrm{Supp}(\mathbf c) = \{i: c_i \neq 0, i \in \mathcal P(\mathcal C)\}
\end{align*}
for any codeword $\mathbf c =(c_0, c_1, \dots, c_{\nu-1})$ in $\mathcal C$.
Let $\mathcal B_{w}(\mathcal C)$ denote the set $\{\{   \mathrm{Supp}(\mathbf c): wt(\mathbf{c})=w
~\text{and}~\mathbf{c}\in \mathcal{C}\}\}$, where $\{\{\}\}$ is the multiset notation. For some special code $\mathcal C$,
the incidence structure $\left (\mathcal P(\mathcal C),  \mathcal B_{w}(\mathcal C) \right)$
could be a $t$-$(v,w,\lambda)$ design for some positive integers $t$ and $\lambda$.
We say that the code $\mathcal C$ \emph{supports $t$-designs} if $\left (\mathcal P(\mathcal C),  \mathcal B_{w}(\mathcal C) \right)$ is a $t$-design for all $w$ with $0\le w \le \nu$. By definition, such design
$\left (\mathcal P(\mathcal C),  \mathcal B_{w}(\mathcal C) \right)$ could have some repeated
blocks, or could be simple, or may be trivial.
In this way, many $t$-designs have been constructed from linear codes (see, for example, \cite{Ding18dcc,ding2018,Tangding2020,du1,Tangdcc2019}). A major way to construct combinatorial $t$-designs with linear codes over finite fields is the use of linear codes with $t$-transitive or $t$-homogeneous automorphism groups (see \cite[Theorem 4.18]{ding2018}) and some combinatorial $t$-designs (see, for example, \cite{LiuDing2017,Liudingtang2021}) were obtained by this way.
Another major way to construct $t$-designs with linear codes is the use of the
Assmus-Mattson Theorem (AM Theorem for short) in \cite[Theorem 4.14]{ding2018} and the generalized version of the
AM Theorem in \cite{Tangit2019}, which was recently employed to construct a number of $t$-designs (see, for example, \cite{ding2018,du1}).
The following theorem is a generalized version of the
AM Theorem, which was developed in \cite[Theorem 5.3]{Tangit2019} and will be needed in this paper.

\begin{theorem}\cite{Tangit2019}\label{thm-designGAMtheorem}
Let $\mathcal C$ be a linear code over $\mathrm{GF}(q)$ with minimum distance $d$ and length $\nu$.
Let $\mathcal C^{\perp}$ denote the dual of $\mathcal C$ with minimum distance $d^{\perp}$.
Let $s$ and $t$ be positive integers with $t< \min \{d, d^{\perp}\}$. Let $S$ be a $s$-subset
of the set $\{d, d+1, d+2, \ldots, \nu-t  \}$.
Suppose that
$\left ( \mathcal P(\mathcal C), \mathcal B_{\ell}(\mathcal C) \right )$ and $\left ( \mathcal P(\mathcal C^{\perp}), \mathcal B_{\ell^{\perp}}(\mathcal C^{\perp}) \right )$
are $t$-designs  for
$\ell    \in \{d, d+1, d+2, \ldots, \nu-t  \} \setminus S $ and $0\le \ell^{\perp} \le s+t-1$, respectively. Then
the incidence structures
 $\left ( \mathcal P(\mathcal C) , \mathcal B_k(\mathcal C) \right )$ and
  $\left ( \mathcal P(\mathcal C^{\perp}), \mathcal B_{k}(\mathcal C^{\perp}) \right )$ are
  $t$-designs for any
$t\le k  \le \nu$, and particularly,
\begin{itemize}
\item the incidence structure $\left ( \mathcal P(\mathcal C) , \mathcal B_k(\mathcal C) \right )$ is a simple $t$-design
      for all integers $k$ with $d \leq k \leq w$, where $w$ is defined to be the largest  integer
      such that $w \leq \nu$ and
      $$
      w-\left\lfloor \frac{w+q-2}{q-1} \right\rfloor <d;
      $$
\item  and the incidence structure $\left ( \mathcal P(\mathcal C^{\perp}), \mathcal B_{k}(\mathcal C^{\perp}) \right )$ is
       a simple $t$-design
      for all integers $k$ with $d \leq k \leq w^\perp$, where $w^\perp$ is defined to be the largest integer
      such that $w^\perp \leq \nu$ and
      $$
      w^\perp-\left\lfloor \frac{w^\perp+q-2}{q-1} \right\rfloor <d^\perp.
      $$
\end{itemize}
\end{theorem}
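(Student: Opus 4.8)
The plan is to reduce everything to one statement: for every $t$-subset $T\subseteq\mathcal P(\mathcal C)$, the weight enumerator of the shortened code $\mathcal C_T=\{(c_i)_{i\notin T}:\mathbf c\in\mathcal C,\ c_i=0\text{ for all }i\in T\}$ is independent of the choice of $T$. By the standard lemma (M\"obius inversion over the subsets of $T$, using that a $t$-design is automatically an $i$-design for every $i\le t$), this $T$-independence is equivalent to $(\mathcal P(\mathcal C),\mathcal B_k(\mathcal C))$ being a $t$-design for all $k$. Since $t<d^\perp$, puncturing $\mathcal C^\perp$ at $T$ is injective, so $(\mathcal C_T)^\perp=(\mathcal C^\perp)^T$ and $\dim\mathcal C_T=k-t$; hence, by the MacWilliams identity, it is equivalent to show that the weight distribution $(A_i^{\perp,T})_{0\le i\le\nu-t}$ of the punctured code $(\mathcal C^\perp)^T$ is independent of $T$. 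I would establish this for $\mathcal C$ first and then deduce the statement for $\mathcal C^\perp$ from it (legitimate because $t<\min\{d,d^\perp\}$).

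Next I would record what the hypotheses give for free. Since a $t$-design is an $i$-design for all $i\le t$, the assumption that $(\mathcal P(\mathcal C),\mathcal B_\ell(\mathcal C))$ is a $t$-design for all $\ell\in\{d,\dots,\nu-t\}\setminus S$ yields, by inclusion--exclusion, that $A_\ell(\mathcal C_T)$ is independent of $T$ for those $\ell$; also $A_\ell(\mathcal C_T)=0$ for $\ell<d$ and for $\ell>\nu-t$. Thus the weight enumerator of $\mathcal C_T$ equals a fixed, $T$-independent polynomial plus $\sum_{\ell\in S}A_\ell(\mathcal C_T)\,z^\ell$, so the only unknowns are the $s$ numbers $A_\ell(\mathcal C_T)$ with $\ell\in S$. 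On the dual side, a short inclusion--exclusion computation gives $A_i^{\perp,T}=\sum_{j=0}^{t}N_{i+j,\,j}(T)$, where $N_{w,j}(T)$ is the number of weight-$w$ codewords of $\mathcal C^\perp$ whose support meets $T$ in exactly $j$ positions; if $\mathcal B_w(\mathcal C^\perp)$ is a $t$-design for every $w\in\{i,i+1,\dots,i+t\}$, then each $N_{w,j}(T)$ depends only on $j$ and the $\lambda$-parameters of that design, so $A_i^{\perp,T}$ is independent of $T$. Because $(\mathcal P(\mathcal C^\perp),\mathcal B_{\ell^\perp}(\mathcal C^\perp))$ is a $t$-design for $0\le\ell^\perp\le s+t-1$, this applies precisely for $i=0,1,\dots,s-1$, giving $s$ values $A_i^{\perp,T}$ that are known and $T$-independent.

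Then I would close a square linear system. Writing the MacWilliams relation for the length-$(\nu-t)$ code $\mathcal C_T$, each $A_i^{\perp,T}$ equals $q^{-(k-t)}\sum_{\ell=0}^{\nu-t}A_\ell(\mathcal C_T)K_i(\ell)$, with $K_i$ the Krawtchouk polynomial of degree $i$ for length $\nu-t$. Substituting the already-known values of $A_\ell(\mathcal C_T)$ for $\ell\notin S$ and the known values of $A_i^{\perp,T}$ for $i=0,\dots,s-1$ turns these into a system of $s$ linear equations in the $s$ unknowns $\{A_\ell(\mathcal C_T):\ell\in S\}$, with right-hand side independent of $T$ and coefficient matrix $M=[\,K_i(\ell)\,]_{0\le i\le s-1,\ \ell\in S}$. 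Since $\{K_0,\dots,K_{s-1}\}$ is a basis of the space of polynomials of degree at most $s-1$ and the $s$ weights in $S$ are distinct integers, $M$ is invertible (no nonzero polynomial of degree at most $s-1$ vanishes at all $s$ of them), so the unknowns are uniquely determined and therefore independent of $T$. By the reduction of the first paragraph, $\mathcal C$ supports $t$-designs for all $k$; feeding this back, the same inclusion--exclusion shows the weight distributions of $\mathcal C$ punctured at $t$-subsets are $T$-independent, so by MacWilliams the shortened codes of $\mathcal C^\perp$ all share a single weight enumerator and $\mathcal C^\perp$ supports $t$-designs for all $k$ too. The two ``simple design'' clauses then come from an elementary pigeonhole argument: if two distinct weight-$w$ codewords $\mathbf c,\mathbf c'$ had equal support, some scalar $\mu$ would make $\mathbf c-\mu\mathbf c'$ vanish on at least $\lfloor\frac{w+q-2}{q-1}\rfloor$ of those positions, yielding a codeword of weight at most $w-\lfloor\frac{w+q-2}{q-1}\rfloor$, which must be $\mathbf 0$ once that quantity is less than $d$ (respectively $d^\perp$); hence for such $w$ distinct blocks arise from non-proportional codewords and $\mathcal B_w(\mathcal C)$ (respectively $\mathcal B_w(\mathcal C^\perp)$) has no repeated block.

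The main obstacle is twofold. First, carrying out the dual inclusion--exclusion cleanly enough to be sure that $N_{w,j}(T)$ is genuinely governed only by the design parameters in the stated weight window — this is exactly where the range $\{i,\dots,i+t\}$, and hence the bound $s+t-1$, enters. Second, and more structurally, the invertibility of $M$: the $s$ exceptional weights produce precisely $s$ unknowns, which is why one needs the dual to support $t$-designs up to weight $s+t-1$ (any fewer conditions leaves the system underdetermined, any more makes it redundant), and the Krawtchouk basis property is what guarantees the resulting $s\times s$ system is solvable simultaneously for every $t$-subset $T$, forcing all the unknown block counts to be constants.
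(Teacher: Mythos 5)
This theorem is quoted in the paper from \cite{Tangit2019} without proof, so there is no in-paper argument to compare against; your reconstruction follows essentially the same route as the original proof in that reference (the shortened-code/punctured-code characterization of codes supporting $t$-designs, the intersection-number computation for the punctured dual, and the invertibility of the $s\times s$ Krawtchouk--Vandermonde system), and it is correct. The only points worth tightening are that the converse direction of your ``standard lemma'' (passing from $T$-independence of the shortened weight enumerators back to the design property, in particular for weights $k>\nu-t$, where the relevant binomial coefficients vanish) is the genuinely nontrivial step and deserves an explicit citation or proof, and that your simplicity argument implicitly uses the convention, also used in this paper, that proportional codewords contribute a single block.
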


Let $\mathcal C$ be a linear code over $\mathrm{GF}(q)$ and $T$ be a set of $t$ coordinate positions in $\C$. We puncture $\mathcal  C$  on
$T$ and obtain a linear code which is called the \emph{punctured code}  of $\mathcal C$ on $T$ and denoted by $\mathcal  C^T$.
We use $\mathcal C(T)$ to denote the set of codewords that are
$\mathbf{0}$ on $T$.
We now puncture $\mathcal C(T)$ on $T$, and obtain a linear code $\mathcal C_{T}$, which is called the \emph{shortened code} of $\mathcal C$ on $T$.
The following property plays an important role in determining the parameters of some shortened codes of $\mathcal{C}$ supporting $t$-designs  in \cite[Theorem 3.2]{Tangit2019}.

\begin{theorem}\cite{Tangit2019}\label{thm-PS}
Let $\mathcal C$ be a $[\nu, m, d]$ linear code  over $\mathrm{GF}(q)$ and  $d^{\perp}$  the minimum distance of $\mathcal  C^{\perp}$.
 Let  $t$ be a positive integer with  $0< t <\min \{d, d^{\perp}\}$.
Let $T$ be  a  set of $t$ coordinate positions in $\mathcal  C$.
Suppose that $\left ( \mathcal P(\mathcal C) , \mathcal B_i(\mathcal C) \right )$ is a $t$-design for any $i$ with $d \le i \le \nu-t$.
Then the shortened code $\mathcal C_T$ is a linear code of length $\nu-t$ and dimension $m-t$. The weight distribution
$\left ( A_k(\mathcal C_T) \right )_{k=0}^{\nu-t}$ of $\mathcal C_T$ is independent of the specific choice of the elements
in $T$. Specifically,
$$A_k(\mathcal C_T) =\frac{ \binom{k}{t} \binom{\nu-t}{k}}{ \binom{\nu }{t} \binom{\nu-t}{k-t}}A_k(\mathcal C).$$
\end{theorem}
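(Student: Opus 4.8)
The plan is to reduce every assertion about the shortened code $\mathcal C_T$ to a counting problem inside the designs $\mathcal B_k(\mathcal C)$, and then to apply the classical intersection-number identities for $t$-designs. First I would settle the length and dimension. The length $\nu-t$ is immediate from the construction. For the dimension I would invoke $t<d^{\perp}$: this forces the restriction map $\pi_T\colon \mathcal C\to \mathrm{GF}(q)^{T}$ that reads off the $t$ coordinates of $T$ to be surjective, because any vector of $\mathrm{GF}(q)^{T}$ orthogonal to $\pi_T(\mathcal C)$ extends by zeros to a word of weight at most $t<d^{\perp}$ in $\mathcal C^{\perp}$, hence is $\mathbf 0$. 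Therefore $\ker\pi_T=\mathcal C(T)$ has dimension $m-t$, and since every word of $\mathcal C(T)$ is already zero on $T$, deleting the $T$-coordinates is a weight-preserving bijection of $\mathcal C(T)$ onto $\mathcal C_T$. Hence $\mathcal C_T$ is a $[\nu-t,\,m-t]$ code and, for each $k$,
\[
A_k(\mathcal C_T)=\#\bigl\{\, c\in\mathcal C \;:\; \mathrm{wt}(c)=k,\ \mathrm{Supp}(c)\cap T=\emptyset \,\bigr\}.
\]

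Next I would read this as a count of blocks. Each weight-$k$ codeword contributes its support as one block of the multiset $\mathcal B_k(\mathcal C)$, so $|\mathcal B_k(\mathcal C)|=A_k(\mathcal C)$ and the right-hand side above is exactly the number of blocks of $\mathcal B_k(\mathcal C)$, counted with multiplicity, that are disjoint from the fixed $t$-set $T$. The cases $0<k<d$ and $k>\nu-t$ are trivial (in the latter, a word vanishing on all of $T$ has support of size at most $\nu-t$), so the substance is the range $d\le k\le \nu-t$, where $\mathcal B_k(\mathcal C)$ is by hypothesis a $t$-$(\nu,k,\lambda_k)$ design. For any $t$-design the number of blocks through a fixed $j$-subset ($j\le t$) is a constant, so by inclusion--exclusion over the subsets of $T$ the number $N_0$ of blocks disjoint from $T$ is a constant depending only on the design parameters; this already yields the claimed independence of the weight distribution of $\mathcal C_T$ from the choice of $T$. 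To pin down $N_0$ I would double-count the pairs $(B,S)$ with $B$ a block of $\mathcal B_k(\mathcal C)$ and $S$ a $t$-subset of $\mathcal P\setminus B$, obtaining $A_k(\mathcal C)\binom{\nu-k}{t}=\binom{\nu}{t}\,N_0$, hence $N_0=A_k(\mathcal C)\binom{\nu-k}{t}/\binom{\nu}{t}$; the routine identities $\binom{\nu}{k}\binom{\nu-k}{t}=\binom{\nu}{t}\binom{\nu-t}{k}$ and $\binom{\nu}{k}\binom{k}{t}=\binom{\nu}{t}\binom{\nu-t}{k-t}$ then turn this into $A_k(\mathcal C_T)=\frac{\binom{k}{t}\binom{\nu-t}{k}}{\binom{\nu}{t}\binom{\nu-t}{k-t}}\,A_k(\mathcal C)$, which is the stated formula.

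The one step that needs genuine care is the design-theoretic counting: $\mathcal B_k(\mathcal C)$ is in general a \emph{multiset} design --- distinct, non-proportional codewords may share a support --- so ``blocks disjoint from $T$'' must be counted with multiplicities, and one must observe that the constant-intersection-number theorem for $t$-designs holds verbatim in the multiset setting, since its proof never uses simplicity. The remaining ingredients (surjectivity of $\pi_T$, weight-preservation, the binomial identities, and the degenerate values of $k$) are entirely routine; the hypotheses $t<d^{\perp}$ (for surjectivity of $\pi_T$) and $t<d$ (so that the $t$-design conditions on the $\mathcal B_i(\mathcal C)$ are non-vacuous) are the only ones actually used.
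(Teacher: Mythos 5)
The paper does not prove this theorem -- it is quoted verbatim from the reference [Tangit2019] -- so there is no internal proof to compare against; your argument is correct and is essentially the standard one from that source: surjectivity of the projection onto the $T$-coordinates via $t<d^{\perp}$ gives the dimension, and the weight distribution reduces to counting blocks of the (multiset) $t$-design $\mathcal B_k(\mathcal C)$ disjoint from $T$, which the double count $A_k(\mathcal C)\binom{\nu-k}{t}=\binom{\nu}{t}N_0$ together with $\binom{\nu-k}{t}=\binom{k}{t}\binom{\nu-t}{k}\big/\binom{\nu-t}{k-t}$ turns into the stated formula. Your explicit remark that the constant-intersection-number argument survives in the multiset setting is exactly the point that needs saying, and the degenerate ranges $k<d$ and $k>\nu-t$ are handled correctly.
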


\section{A class of ternary linear codes} \label{sec-3code}

In this section, our task is to determine the parameters of the ternary cyclic code $\widetilde{\C_{n}}$ and its dual code.

Starting from this section till the end of this paper unless otherwise stated, we assume that $n=2m+1$ with some positive integer $m$, $q=3^n$, $d=2\cdot 3^m+1$ and $SQ$ (resp., $ NSQ$) is the set of all squares (resp., non-squares) in $\gf(q)$.

In order to determine the parameters of the ternary code $\widetilde{\C_{n}}$ and its dual code, we need the results in the following two lemmas. The former was documented in \cite{Dob2001} and the latter is given in Lemma \ref{A230}.

\begin{lemma}\cite{Dob2001}\label{helle8}
Let symbols and notation be the same as before. Let $m\geq 2$ be an positive integer. Then the system of equations
\begin{eqnarray*}
\left\{
\begin{array}{ll}
1+y_1+y_2=0&\\[2mm]
1+ y_1^{~d}+y_2^{~d}=0
\end{array}
 \right..
\end{eqnarray*}
for $y_1,y_2\in \gf(q)^*$ has only one solution, i.e., $y_1=y_2=1$.

\end{lemma}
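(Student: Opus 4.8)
The plan is to eliminate one variable and reduce the system to a single identity in $y_1$. From the first equation $y_2 = -1 - y_1$; substituting into the second and using that $q$ is a power of $3$ and $d = 2\cdot 3^m + 1$ is odd (so $(-1-y_1)^d = -(1+y_1)^d$), the system becomes
\begin{equation*}
(1+y_1)^{d} = 1 + y_1^{d},
\end{equation*}
and the lemma amounts to showing that the only solution of this equation with $y_1 \neq 0$ and $-1-y_1 \neq 0$ is $y_1 = 1$.

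Next I would exploit that $\sigma\colon x \mapsto x^{3^m}$ is a field automorphism of $\gf(q)$. Writing $\bar x = \sigma(x)$ and using additivity together with $\bar 1 = 1$, we get $y_1^{d} = y_1\,\bar y_1^{\,2}$ and $(1+y_1)^{d} = (1+\bar y_1)^{2}(1+y_1)$. Expanding $(1+\bar y_1)^{2}(1+y_1)$, subtracting $1 + y_1\bar y_1^{\,2}$, cancelling common terms and reducing the coefficients modulo $3$, the equation collapses to
\begin{equation*}
y_1 - \bar y_1 - y_1\bar y_1 + \bar y_1^{\,2} = \bigl(y_1 - \bar y_1\bigr)\bigl(1 - \bar y_1\bigr) = 0 .
\end{equation*}

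It remains to treat the two factors. If $\bar y_1 = y_1$, then $y_1$ lies in the fixed field of $\sigma$, namely $\gf(3^{\gcd(m,n)}) = \gf(3)$ since $\gcd(m,2m+1)=1$; the constraints $y_1 \neq 0$ and $-1 - y_1 \neq 0$ (i.e.\ $y_1 \neq 2$) then force $y_1 = 1$. If $\bar y_1 = 1$, apply $\sigma$ again to obtain $\sigma^{2}(y_1) = 1$, and cube: because $n = 2m+1$ we have $\bigl(\sigma^{2}(y_1)\bigr)^{3} = y_1^{3^{2m+1}} = y_1^{3^{n}} = y_1$, whence $y_1 = 1$ as well. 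In both cases $y_1 = 1$ and therefore $y_2 = -1 - 1 = 1$, which is the asserted unique solution.

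There is no real obstacle in this argument: it is an elementary manipulation once $y_2$ is eliminated and the $d$-th power is rewritten through $\sigma$. The only steps needing a little care are the expansion over $\gf(3)$ that yields the factorisation above, and, in the second case, the use of $\sigma^{2}$ combined with $n = 2m+1$ to close the argument.
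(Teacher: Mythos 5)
Your proof is correct and complete. The paper gives no proof of this lemma---it is simply cited from \cite{Dob2001}---so there is nothing to compare against; your derivation (eliminate $y_2$ using that $d$ is odd, write $x^d=\sigma(x)^2x$ for the automorphism $\sigma\colon x\mapsto x^{3^m}$, and reduce over $\gf(3)$ to the factorisation $(y_1-\bar y_1)(1-\bar y_1)=0$) is the natural argument, and the case analysis, including the exclusions $y_1\neq 0$ and $y_1\neq -1$ and the fixed-field computation via $\gcd(m,2m+1)=1$, all checks out.
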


\begin{lemma}\label{A230}
Let symbols and notation be the same as before.
Let $m\geq 2$ be an positive integer and $c_j \in \gf(3)^*$ for any positive integer $j$.  Let $N$ be the number of $\{x_1,x_2,\cdots,x_i\}\subseteq SQ$ satisfying the system of equations
\begin{eqnarray}\label{numA23}
\left\{
\begin{array}{ll}
\sum_{s=1}^{s=i} c_s x_s=0&\\[2mm]
\sum_{s=1}^{s=i} c_s x_s^{~d}=0
\end{array}
 \right.
\end{eqnarray}
where $i\in\{1, 2,3\}$. Then $N=0$.
\end{lemma}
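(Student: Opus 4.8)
The plan is to dispose of the three values $i\in\{1,2,3\}$ one at a time, with the case $i=3$ being where Lemma~\ref{helle8} does the real work. Two elementary observations will be used throughout. \emph{(a)} Since $n=2m+1$ is odd, $3^{n}\equiv(-1)^{n}\equiv 3\pmod 4$, so $(q-1)/2$ is odd; hence $-1$, the unique element of order $2$ in $\gf(q)^{*}$, is not a square, and therefore $-x\in NSQ$ whenever $x\in SQ$. \emph{(b)} Since $d=2\cdot 3^{m}+1$ is odd, $c^{d}=c$ for every $c\in\gf(3)^{*}$ (indeed $1^{d}=1$ and $(-1)^{d}=-1$). Recall also that $SQ\subseteq\gf(q)^{*}$, so every $x_{s}$ appearing in \eqref{numA23} is nonzero.

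When $i=1$, the first equation of \eqref{numA23} reads $c_{1}x_{1}=0$ with $c_{1}\in\gf(3)^{*}$, which forces $x_{1}=0\notin SQ$; hence $N=0$. When $i=2$, I split according to whether $c_{1}=c_{2}$. If $c_{1}=c_{2}$, the first equation gives $x_{2}=-x_{1}$, and then \emph{(a)} shows that $x_{1}\in SQ$ forces $x_{2}\in NSQ$, contradicting $x_{2}\in SQ$. If $c_{1}\neq c_{2}$, then $c_{2}/c_{1}=-1$ and the first equation gives $x_{1}=x_{2}$, contradicting that $\{x_{1},x_{2}\}$ is a $2$-subset. Thus $N=0$ for $i\le 2$.

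When $i=3$, suppose $\{x_{1},x_{2},x_{3}\}\subseteq SQ$ satisfies \eqref{numA23}. Since $x_{1}\neq 0$, I divide the first equation by $c_{1}x_{1}$ and the second by $c_{1}x_{1}^{\,d}$, and set
\[
y_{1}=\frac{c_{2}}{c_{1}}\cdot\frac{x_{2}}{x_{1}},\qquad y_{2}=\frac{c_{3}}{c_{1}}\cdot\frac{x_{3}}{x_{1}}\in\gf(q)^{*}.
\]
The first equation becomes $1+y_{1}+y_{2}=0$. Using \emph{(b)}, $(c_{2}/c_{1})^{d}=c_{2}/c_{1}$ and $(c_{3}/c_{1})^{d}=c_{3}/c_{1}$, so $y_{1}^{\,d}=(c_{2}/c_{1})(x_{2}/x_{1})^{d}$ and $y_{2}^{\,d}=(c_{3}/c_{1})(x_{3}/x_{1})^{d}$; hence the second equation is exactly $1+y_{1}^{\,d}+y_{2}^{\,d}=0$. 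As $m\ge 2$, Lemma~\ref{helle8} forces $y_{1}=y_{2}=1$, i.e.\ $x_{2}=(c_{1}/c_{2})\,x_{1}$ and $x_{3}=(c_{1}/c_{3})\,x_{1}$. But $c_{1}/c_{2}\in\gf(3)^{*}=\{1,-1\}$: if $c_{1}/c_{2}=1$ then $x_{2}=x_{1}$, contradicting distinctness; if $c_{1}/c_{2}=-1$ then $x_{2}=-x_{1}\in NSQ$ by \emph{(a)}, contradicting $x_{2}\in SQ$. Both cases are impossible, so $N=0$ for $i=3$ as well, and the lemma follows.

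Every computation above is routine; the only substantive step is the normalization that recasts the $i=3$ instance of \eqref{numA23} into the precise form of Lemma~\ref{helle8}. The point that must be handled carefully is that this recasting relies on \emph{both} special features of the parameters: the oddness of $d$, which makes the nonzero constants $c_{s}/c_{1}$ invariant under $x\mapsto x^{d}$ and thereby aligns the $d$-th-power equation with Lemma~\ref{helle8}, and the oddness of $n$, which guarantees $-1\notin SQ$ and so eliminates the residual possibility $x_{2}=-x_{1}$ left after applying Lemma~\ref{helle8}.
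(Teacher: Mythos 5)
Your proof is correct, and for $i=1,2$ it coincides with the paper's argument. The difference is in the case $i=3$: the paper reduces, by symmetry, to the two sign patterns $(c_1,c_2,c_3)=(1,1,1)$ and $(1,1,-1)$, applies Lemma~\ref{helle8} only to the first, and handles the mixed-sign case by a separate computation (deriving $(y_1^{3^m}-1)(y_1^{3^m}-y_1)=0$ and invoking $\gcd(m,n)=1$ to force $y_1\in\gf(3)$). Your normalization $y_s=(c_s/c_1)(x_s/x_1)$, which uses the oddness of $d$ to get $(c_s/c_1)^d=c_s/c_1$ and hence recasts \emph{every} sign pattern of (\ref{numA23}) into the exact hypothesis of Lemma~\ref{helle8}, eliminates that second case entirely: a single application of the lemma yields $x_s=\pm x_1$, which is then killed by distinctness or by $-1\in NSQ$ (valid since $3^n\equiv 3\pmod 4$ for odd $n$). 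This is a genuine streamlining — one uniform reduction in place of a case split plus an ad hoc field-theoretic computation — and it reaches the same conclusion with the same key lemma; the paper's route has the minor advantage of not needing the observation that the constants are fixed by $x\mapsto x^d$, but at the cost of the extra subcase.
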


\begin{proof}
It is clear that $N=0$ for $i=1$. Next we give the proof for the case $i=2$ and $i=3$, respectively.

If $i=2$, (\ref{numA23}) becomes
\begin{eqnarray}\label{numA2}
\left\{
\begin{array}{ll}
c_1x_1+c_2x_2=0&\\[2mm]
c_1 x_1^{~d}+c_2 x_2^{~d}=0
\end{array}
 \right..
\end{eqnarray}
Note that $c_1,c_2 \in \gf(3)^*$, i.e., $c_1,c_2 \in \{1,-1\}$. It is obvious that $x_1=x_2$ if $c_1 \neq c_2$.
When $c_1=c_2$, (\ref{numA2}) is equivalent to
\begin{eqnarray}\label{numA2-1}
\left\{
\begin{array}{ll}
x_1+x_2=0&\\[2mm]
x_1^{~d}+x_2^{~d}=0
\end{array}
 \right..
\end{eqnarray}
Since $d$ is odd, from (\ref{numA2-1}) we have $x_1=-x_2$. This means that one of $x_1$ and $x_2$ is in $SQ$ and the other is in $NSQ$. Thus, $N=0$.

If  $i=3$, (\ref{numA23}) becomes
\begin{eqnarray}\label{numA3}
\left\{
\begin{array}{ll}
c_1x_1+c_2x_2+c_3 x_3=0&\\[2mm]
c_1 x_1^{~d}+c_2 x_2^{~d}+c_3 x_3^{~d}=0
\end{array}
 \right..
\end{eqnarray}
Note that $c_1,c_2, c_3 \in \gf(3)^*$, i.e., $c_1,c_2, c_3 \in \{1,-1\}$.
Due to symmetry, it is sufficient to consider the following two
cases.
\begin{itemize}
  \item When $c_1=c_2=c_3=1$, (\ref{numA3}) becomes
\begin{eqnarray}\label{numA3-1}
\left\{
\begin{array}{ll}
1+x_2/x_1+x_3/x_1=0&\\[2mm]
1+ (x_2/x_1)^{~d}+(x_3/x_1)^{~d}=0
\end{array}
 \right..
\end{eqnarray}
Putting $y_1=x_2/x_1$ and $y_2=x_3/x_1$ into (\ref{numA3-1}) leads
to
\begin{eqnarray}\label{numA3-11}
\left\{
\begin{array}{ll}
1+y_1+y_2=0&\\[2mm]
1+ y_1^{~d}+y_2^{~d}=0
\end{array}
 \right..
\end{eqnarray}
From Lemma \ref{helle8} and (\ref{numA3-11}), we then deduce that $x_2/x_1=x_3/x_2=1$, i.e., $x_1=x_2=x_3$. This means that there is no set $\{x_1,x_2,x_3\}\subseteq SQ$ satisfying (\ref{numA3-1}). Thus, $N=0$.

  \item When $c_1=c_2=1, c_3=-1$. Substituting $y_1=x_2/x_1$ and $y_2=x_3/x_1$ into (\ref{numA3}) leads
to
$$
(y_1^{~3^m}-1) (y_1^{~3^m}-y_1)=0,
$$
which means that $y_1\in \gf(3^n)\bigcap \gf(3^m)=\gf(3)$ since $\gcd(n,m)=1$. Note that $y_1,y_2\neq 0$. Then $(y_1, y_2)=(1,-1)$ or  $(-1,1)$.
Thus, there is no set $\{x_1,x_2,x_3\}\subseteq SQ$ satisfying $\{x_2/x_1, x_3/x_1\} =\{1,-1\}$.
\end{itemize}

The desired conclusions then follow from the discussions above. This completes the proof.
\end{proof}

Next we give the parameters of $\widetilde{\C_{n}}$ and its dual $\widetilde{\C_{n}}^\bot$ in the following theorem, which is one of the main results in this paper.

\begin{theorem}\label{thm-main1}
Let symbols and notation be the same as before. Let $m\geq 2$ be an positive integer and $\C=\widetilde{\C_{n}}$ be defined by (\ref{c1}).
Then we have the following results.
\begin{enumerate}
 \item [(\uppercase\expandafter{\romannumeral1})] The code $\C$  has parameters $[(q-1)/2, 2n, 3^{n-1}-3^m]$  and  the weight distribution in Table \ref{tab-3}.
 \item [(\uppercase\expandafter{\romannumeral2})] The dual code $\C^\bot $ of  $\C$ has parameters $[(q-1)/2, (q-1)/2-2n,4]$. Moreover, the number of the minimum weight codewords in $\C^\bot $ is
 $$
 A_4(\C^\bot )= \frac{1}{8} \cdot (1 - 4\cdot 3^{2 m} + 3^{1 + 4 m}).
 $$
\end{enumerate}
\end{theorem}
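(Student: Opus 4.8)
The plan is to obtain both parts from the known weight distribution of the ternary code in \eqref{c0}, transferred to the cyclic setting of $\widetilde{\C_{n}}$, together with the combinatorial count in Lemma~\ref{A230} and the Pless power moments \eqref{eq:PPM}. First I would make precise the relationship between the code $\C_0$ in \eqref{c0} and the cyclic code $\C=\widetilde{\C_{n}}$ in \eqref{c1}. Since $\gcd(d,3^n-1)=1$ and the exponents $2i$ run over the even residues modulo $3^n-1$, i.e.\ over $SQ$, the codeword $(\tr(a\alpha^{2i}+b\alpha^{2di}))_i$ is obtained from the function $x\mapsto \tr(ax+bx^d)$ by restricting $x$ to the set of squares $SQ$, which has size $(q-1)/2$. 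Thus $\C$ is the restriction (puncturing on $NSQ\cup\{0\}$) of $\C_0$; since the three nonzero weights of $\C_0$ are $2(3^{n-1}-3^m)$, $2\cdot 3^{n-1}$, $2(3^{n-1}+3^m)$, and since for every nonzero $(a,b)$ the map $y\mapsto \tr(ay+by^d)$ takes the value $0$ equally often (up to the known balancedness) on $SQ$ and on $NSQ$, each nonzero codeword of $\C$ has weight exactly half of the corresponding weight in $\C_0$, giving the three weights $3^{n-1}-3^m$, $3^{n-1}$, $3^{n-1}+3^m$ and length $(q-1)/2$. I would verify that the map $(a,b)\mapsto$ codeword is still injective on $\gf(q)^2$ (so the dimension stays $2n$): a nonzero codeword of $\C$ cannot be zero because its weight is one of the three strictly positive values above. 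This establishes the parameters and weight distribution claimed in part (I); the explicit frequencies in Table~\ref{tab-3} then follow by solving the linear system given by the first two (or three) Pless power moments in \eqref{eq:PPM}, exactly as for $\C_0$.

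For part (II), the dimension of $\C^\bot$ is $(q-1)/2-2n$ by part (I). To get $d^\bot=4$, I would argue in two directions. First, $d^\bot\ge 4$: a codeword of weight $\le 3$ in $\C^\bot$ corresponds, via the generator/parity structure of $\C$, to a nontrivial $\gf(3)$-linear dependence among at most three of the columns of a generator matrix of $\C$; each column is indexed by an element $x\in SQ$ and equals $\bigl(\text{coordinates of }x\text{ under }\tr(a\cdot)\bigr)$ together with $\bigl(\text{coordinates of }x^d\text{ under }\tr(b\cdot)\bigr)$, so such a dependence means there exist $c_1,\dots,c_i\in\gf(3)^*$ and distinct $x_1,\dots,x_i\in SQ$ with $i\le 3$ satisfying $\sum_s c_sx_s=0$ and $\sum_s c_sx_s^{\,d}=0$. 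Lemma~\ref{A230} says the number of such configurations is $0$, so $d^\bot\ge 4$. Second, $d^\bot\le 4$: since $\C^\bot$ has codimension $2n < (q-1)/2 - $ (something small), there must exist a weight-$4$ codeword; rather than exhibit one directly, I would compute $A_4(\C^\bot)$ from the power moments and observe it is a positive integer, which simultaneously shows $d^\bot=4$ and gives the stated formula.

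The computation of $A_4(\C^\bot)$ is the main obstacle and the only real work. The idea is to use the first four Pless power moments \eqref{eq:PPM} for $\C$: the left-hand sides $\sum_i i^j A_i$ for $j=0,1,2,3$ are all known explicitly from the three-valued weight distribution of $\C$ in Table~\ref{tab-3}, while the right-hand sides are polynomials in $v=(q-1)/2$, $q=3$, $k=2n$ and the unknowns $A_1^\bot,A_2^\bot,A_3^\bot,A_4^\bot$. From $d^\bot\ge 4$ we have $A_1^\bot=A_2^\bot=A_3^\bot=0$, so the fourth moment identity reduces to a single linear equation for $A_4^\bot$, namely
$$
24\,A_4^\bot = \sum_{i} i^4 A_i \;-\; q^{k-4}(q-1)v\bigl(q^3v^3-3q^2v^3+\cdots+6\bigr),
$$
where $\sum_i i^4 A_i$ is read off from Table~\ref{tab-3}. (Here I need the \emph{fifth} moment, i.e.\ the $i^4$ identity, which is the last line of \eqref{eq:PPM}; note the $i^3$ identity already forces consistency of $A_3^\bot=0$.) Substituting $q=3$, $v=(3^n-1)/2=(3^{2m+1}-1)/2$, $k=2n=2(2m+1)$, and simplifying the resulting expression in powers of $3^m$, I expect everything to collapse to
$$
A_4(\C^\bot)=\tfrac18\bigl(1-4\cdot 3^{2m}+3^{\,1+4m}\bigr).
$$
The potential pitfalls are purely bookkeeping: correctly halving the weights and frequencies when passing from $\C_0$ to $\C$, keeping track of the factor $q^{k-4}$ with $q=3$, and checking that the final quantity is indeed a nonnegative integer for all $m\ge 2$ (which it is, since $3^{1+4m}\equiv 3$, $4\cdot 3^{2m}\equiv 4$, $1\equiv 1 \pmod 8$ gives $1-4+3\equiv 0$). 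Once $A_4(\C^\bot)>0$ is confirmed, $d^\bot=4$ follows, completing the proof.
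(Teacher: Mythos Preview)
Your proposal is correct and follows essentially the same route as the paper: pass from the code in \eqref{c0} to $\widetilde{\C_n}$ by restricting to $SQ$ (your halving argument via $x\leftrightarrow -x$ is exactly what the paper's phrase ``by definitions'' hides), invoke Lemma~\ref{A230} to get $d^\bot\ge 4$, use the first three Pless moments for the multiplicities in Table~\ref{tab-3}, and then the fifth moment for $A_4^\bot$. The only presentational wrinkle is ordering: you appeal to the first three Pless moments in part~(I) before establishing $A_1^\bot=A_2^\bot=0$, which you only do in part~(II); the paper places the Lemma~\ref{A230} step inside~(I) so that the moment equations are usable when you reach them---just swap those two sentences and your argument matches the paper's.
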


\begin{table}[ht]
\begin{center}
\caption{The weight distribution of $\C$.} \label{tab-3}
\begin{tabular}{|c|c|}
\hline
weight & multiplicity \\[2mm]
\hline
$0$ & $1$
\\[2mm]
\hline
$3^{n-1}-3^m$ &  $\frac{1}{2} \cdot 3^m (1 + 3^m) (-1 + 3^{1 + 2 m})$
\\[2mm]
\hline
$3^{n-1}$ & $-1 + 2 \cdot 3^{1 + 4 m} + 9^m$
\\[2mm]
\hline
$3^{n-1}+3^m$& $\frac{1}{2} \cdot 3^m (-1 + 3^m) (-1 + 3^{1 + 2 m})$
\\[2mm]
\hline
\end{tabular}
\end{center}
\end{table}

\begin{proof}
(\uppercase\expandafter{\romannumeral1})~
By definitions, from the parameters of (\ref{c0}) we then deduce that
the code $\C$ has parameters $[(q-1)/2, 2n, 3^{n-1}-3^m]$ and three nonzero weights, i.e., $3^{n-1}-3^m$, $3^{n-1}$ and $3^{n-1}+3^m$. Further, from definitions and Lemma \ref{A230} we have that the minimum distance of $\C^\bot$ is at least $4$. Then the first three Pless power moments of
(\ref{eq:PPM}) yield the weight distribution in Table \ref{tab-3}.

(\uppercase\expandafter{\romannumeral2}) The desired conclusions follow from  the conclusion of (\uppercase\expandafter{\romannumeral1})  and the fifth Pless power moments of (\ref{eq:PPM}).
\end{proof}

\begin{rem}\label{rem0}
Note that the code $\C$ in Theorem \ref{thm-main1} and the ternary codes in \cite{Tangdcc2019} have the same parameters. However, these codes are different.

\end{rem}

\begin{example}\label{exam-31}
Let $m=2$. Then the code $\C$ is a $[121,10,72]$ ternary linear code with the weight enumerator
$1 + 10890 z^{72} +  39446 z^{81} +  8712  z^{90}.$  The code $\C$ is optimal. The dual code $\C^\perp $of $\C$ has parameters $[121,111,4]$ and is almost optimal according to the tables of best known codes maintained at http: //www.codetables.de. The number of the codewords of the minimum weight $4$ in $\C^\perp $ is $2420$.
\end{example}

\begin{example}\label{exam-32}
Let $m=3$. Then the code $\C$ is a $[1093 ,14 ,702]$ ternary linear code with the weight enumerator
$1 + 826308 z^{702} +  3189374 z^{729} +  767286  z^{756}$. The dual code $\C^\perp $of $\C$ has parameters $[1093,1079,4]$. The number of the codewords of the minimum weight $4$ in $\C^\perp $ is $198926$.
\end{example}

\section{Infinite families of $2$-designs from  ternary linear codes} \label{sec-des2}

In this section, we will show that the ternary code $\widetilde{\C_{n}}$ and its dual $\widetilde{\C_{n}}^\bot$  hold $2$-designs and  determine the parameters of these $2$-designs.

Let $m\geq 2$ be an positive integer and $d_0=3-2\cdot 3^{m+1}=-d^{-1} (mod~3^n-1)$. For convenience, we denote
$$
P=\{x\in \gf(q): ~x^{d_0}-1 \in SQ\}, ~NP=\{x\in \gf(q): ~x^{d_0}-1 \in NSQ\}
$$
$S=\{x\in \gf(q):~1-x^2\in SQ,~x\neq 0 \}$ and $NS=S^{-1}$. It is clear that $\gf(q)=S\cup NS \cup \gf(3)$. Then we have the following two results which will be used later. The former was documented in \cite{Dob2001}, and the latter is given in Lemma \ref{SZ}

\begin{lemma}\cite{Dob2001} \label{f}
Let symbols and notation be the same as before. Define $ f(x)=(x+1)^d-x^d$ with $x\in \gf(q)$. Then $f$ maps $S$, $NS$ and $\gf(3)$ into the set $P$, $NP$ and $\{1\}$, respectively. Moreover, the restriction of $f$ to $S$ maps $4$-to-$1$, and the
restriction of $f$ to $NS$ maps $2$-to-$1$.
\end{lemma}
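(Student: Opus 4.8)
The plan is to peel the statement into its three parts—the three image inclusions, the $4$-to-$1$ behaviour on $S$, and the $2$-to-$1$ behaviour on $NS$—and reduce all of them to a single auxiliary count governed by an algebraic identity. First I would record the identity that drives everything. Writing $x^{d}=x\cdot (x^{3^m})^{2}$ and using that $x\mapsto x^{3^m}$ is additive and fixes $\gf(3)$, a short expansion (with $2=-1$ in characteristic $3$) gives
\begin{equation*}
f(x)-1=(x^{3^m}-1)(x^{3^m}-x).
\end{equation*}
Since $\gcd(m,n)=1$ forces $\gf(3^m)\cap\gf(q)=\gf(3)$, and $x^{3^m}=1$ only for $x=1$, this already yields $f(x)=1$ exactly when $x\in\gf(3)$, which settles $f(\gf(3))=\{1\}$ and, at the same time, shows that no element outside $\gf(3)$ is sent to $1$.

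Next I would use symmetry to fix the shape of the fibres. Because $d$ is odd, $f(-1-x)=(-x)^{d}-\bigl(-(x+1)\bigr)^{d}=f(x)$, so $x\mapsto -1-x$ is an involution that preserves every fibre and has the single fixed point $x=1$. Setting $w=x-1$ rewrites the displayed identity as $f(x)-1=\phi(w)$ with $\phi(w)=w^{2\cdot 3^{m}}-w^{3^{m}+1}$, and $\phi(-w)=\phi(w)$; hence $|f^{-1}(c)|=|\phi^{-1}(c-1)|$ and every nonzero fibre is a union of $\{w,-w\}$ pairs, supplying a ``free'' factor of $2$. The multiplicity statements then reduce to: for $c\neq 1$, the equation $\phi(w)=c-1$—equivalently the quadratic $W^{2}-wW=c-1$ in $W=w^{3^m}$—has exactly $4$ solutions when $c$ comes from $S$ and exactly $2$ when $c$ comes from $NS$. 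This is where I would feed in Lemma \ref{helle8} together with Lemma \ref{A230}: writing a collision $f(x_{1})=f(x_{2})$ as a vanishing sum of four $d$-th powers and normalising reduces the possible coincidences to the systems those lemmas control, which caps the fibre sizes and isolates the two admissible values $4$ and $2$.

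For the image inclusions I would first reformulate membership in $P$. Using $d_{0}\equiv -d^{-1}\pmod{3^{n}-1}$, put $z=x^{1/d}$; then $x^{d_{0}}=z^{-1}$, so $x^{d_{0}}-1=(1-z)z^{-1}$, and multiplying by the square $z^{2}$ gives the clean criterion $x\in P\iff z(1-z)\in SQ$, with $x\in NP\iff z(1-z)\in NSQ$ (here $-1\in NSQ$, since $q\equiv 3\pmod 4$, is what makes the comparison of $S$ with $NS=S^{-1}$ consistent). The task is then to locate the unique $d$-th root $z$ of $f(x)$ and to compare the quadratic character of $z(1-z)$ with that of $1-x^{2}$, the datum separating $S$ from $NS$. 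The plan is to produce, through the Welch-decimation identities of \cite{Dob2001} (Dobbertin's method), a rational expression for this root—equivalently a closed form for $f(x)^{d_{0}}$ in $x$—whose associated character equals $+1$ exactly on $S$ and $-1$ on $NS$; combining this with the fibre count of the previous paragraph matches the $4$ preimages to $P$ and the $2$ preimages to $NP$.

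The hardest part, and the real content of the lemma, is this last quantitative step. Neither the reflection symmetry nor the factorisation alone distinguishes the value $4$ from $2$, nor aligns the character of $z(1-z)$ with that of $1-x^{2}$; both rely on the Welch-specific arithmetic of \cite{Dob2001}, in effect the evaluation showing that the cross-correlation of the two ternary $m$-sequences is three-valued. I therefore expect the main obstacle to be establishing the precise solution count for $\phi(w)=c-1$ and the accompanying character identity, with Lemma \ref{helle8} and Lemma \ref{A230} serving as the elementary inputs that make the counting rigorous.
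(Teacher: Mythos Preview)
The paper does not prove this lemma at all; it is quoted verbatim from \cite{Dob2001}. So there is no ``paper's own proof'' to compare against, and your proposal must stand or fall on its own.

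Your preliminary reductions are correct and well chosen: the factorisation $f(x)-1=(x^{3^m}-1)(x^{3^m}-x)$, the consequence $f^{-1}(1)=\gf(3)$, and the two involutions $x\mapsto -1-x$ (equivalently $w\mapsto -w$) are exactly the right scaffolding. However, the substance of the lemma---that fibres over $P$ have size $4$, fibres over $NP$ have size $2$, and that the corresponding preimages sit in $S$ and $NS$ respectively---is not proved in your write-up. You explicitly defer it to ``the Welch-decimation identities of \cite{Dob2001}'' and to Lemmas~\ref{helle8} and~\ref{A230}. The latter route does not work as stated: Lemma~\ref{A230} controls systems $\sum c_s x_s=\sum c_s x_s^{d}=0$ only when every $x_s$ lies in $SQ$, whereas a collision $f(x_1)=f(x_2)$ unpacks to a relation among $x_1,\,x_1+1,\,x_2,\,x_2+1$, and there is no reason all four of these are squares. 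There are also two slips in the exposition: you write that the equation $\phi(w)=c-1$ ``has exactly $4$ solutions when $c$ comes from $S$,'' but $c=f(x)$ is a value, so it lives in $P$ or $NP$, not in $S$ or $NS$; and calling $W^{2}-wW=c-1$ a ``quadratic in $W=w^{3^m}$'' obscures that $W$ is determined by $w$, so this is a degree-$2\cdot 3^{m}$ equation in $w$, not a genuine quadratic you can solve independently. In short, the proposal is a sound outline of the easy parts together with an acknowledgement that the hard parts require the machinery of \cite{Dob2001}; it is not yet a proof.
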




\begin{lemma}\label{SZ}
Let symbols and notation be the same as before. Let $z\in \gf(q) \setminus \gf(3)$ and denote
$$
S_{z,i}=\{(x,y)\in \gf(q)^2:~x+y+i z+1=0,~x^d+y^d+i z^d +1=0,~(x+1)(x+iz)\neq 0  \},
$$
where $i =\pm 1 $.
Then $(\# S_{z,+1}, \# S_{z,-1})=(2,0)~or~(0,2)$.
\end{lemma}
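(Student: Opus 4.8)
The plan is to reduce the counting of $S_{z,i}$ to the $4$-to-$1$ / $2$-to-$1$ behaviour of the map $f(x)=(x+1)^d-x^d$ established in Lemma \ref{f}, using the substitution that turns the pair of equations into a single equation in $f$. First I would observe that for $(x,y)\in S_{z,i}$ the first equation gives $y=-1-x-iz$, so $y$ is determined by $x$; substituting into the second equation and using that $x\mapsto x^d$ behaves multiplicatively, we get, after dividing through by a suitable power of $(x+iz)$ (legitimate because $x+iz\neq 0$), an equation of the shape $f\bigl(u\bigr)=w$ for an appropriate affine change of variable $u=u(x,z)$ and a constant $w=w(z)$ lying in $P$, $NP$, or $\{1\}$ according to whether the relevant argument is in $S$, $NS$, or $\gf(3)$. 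The side conditions $(x+1)(x+iz)\neq 0$ are exactly what is needed to make the change of variable reversible and to land in the domain $S\cup NS\cup\gf(3)=\gf(q)\setminus\{$bad points$\}$ on which Lemma \ref{f} operates.

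Next I would exploit the fact that replacing $i=+1$ by $i=-1$ corresponds to replacing $z$ by $-z$ (equivalently $iz$ runs over both $\pm z$), so $\#S_{z,+1}$ and $\#S_{z,-1}$ count solutions sitting over the two arguments $z$ and $-z$ of the relevant derived quantity. Since $z\in\gf(q)\setminus\gf(3)$, exactly one of $z,-z$ (or the corresponding derived value $v$ and $v^{-1}$, using $S$ and $NS=S^{-1}$) lies in $S$ and the other in $NS$ — this is the dichotomy $\gf(q)=S\cup NS\cup\gf(3)$ together with the involution $x\mapsto x^{-1}$ swapping $S$ and $NS$. Feeding this into Lemma \ref{f}: over the argument lying in $S$ the fibre of $f$ has size $4$, and over the argument lying in $NS$ the fibre has size $2$. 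One then has to check that in passing from the fibre of $f$ back to the set $S_{z,i}$ no solutions are gained or lost and none are spuriously excluded by the conditions $(x+1)(x+iz)\neq 0$; combined with Lemma \ref{helle8} (to rule out the degenerate coincidences $x+1=0$ or $x+iz=0$ actually occurring among genuine solutions), this yields fibre sizes $4$ and $0$, or $0$ and $4$ — wait, one must be careful here: the claimed answer is $(2,0)$ or $(0,2)$, so the correct bookkeeping is that the ordered pair $(x,y)$ and its "swap" $(y,x)$ are both counted, or alternatively an extra factor of $2$ is absorbed, so that the $4$-to-$1$ fibre over $S$ collapses to $2$ honest solutions of $S_{z,i}$ after removing those violating the side constraints, and the $2$-to-$1$ fibre over $NS$ collapses to $0$. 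I would nail down this factor-of-$2$ reconciliation explicitly by writing out the substitution with all constants, since that is where the precise count $2$ versus $4$ is decided.

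The main obstacle I anticipate is precisely this last bookkeeping step: carrying out the affine substitution cleanly enough that the domain restrictions of Lemma \ref{f} match the side conditions $(x+1)(x+iz)\neq 0$ on the nose, and correctly accounting for the multiplicity with which each point of the $f$-fibre corresponds to a point of $S_{z,i}$ (including whether the two equations force an additional symmetry like $x\leftrightarrow y$ that halves the count). Everything else — the elimination of $y$, the reduction to a single equation, the $z\leftrightarrow -z$ symmetry, and the $S$/$NS$ dichotomy — is routine once the substitution is set up. I would also double-check the edge behaviour when the derived argument happens to be forced into $\gf(3)$ despite $z\notin\gf(3)$, and use Lemma \ref{helle8} to confirm that this does not produce extra solutions, so that the final count is uniformly $(2,0)$ or $(0,2)$ with no exceptional $z$.
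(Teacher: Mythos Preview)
Your overall strategy --- eliminate $y$, reduce to a single equation of the form $f(u)=w$, and apply Lemma~\ref{f} --- matches the paper. But three concrete pieces of your bookkeeping are wrong or missing, and without them the argument does not close.

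First, the substitution: you propose dividing by a power of $(x+iz)$, but the right normalisation is by $(z+1)^d$. Writing $x=-y-(z+1)$ and dividing by $(z+1)^d$ (legitimate since $z\neq -1$) gives
\[
f\!\left(\frac{y}{z+1}\right)=\frac{z^d+1}{(z+1)^d}=:w.
\]
This makes $w$ a constant depending only on $z$, which is essential.

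Second, and this is the main gap: you never determine whether $w\in P$ or $w\in NP$. Lemma~\ref{f} tells you the fibre of $f$ over $w$ has size $4$ or $2$ according to this, but you need an actual computation to decide. The paper computes $w^{d_0}-1$ explicitly (using $dd_0\equiv -1$) and finds it equals $z$ times a nonzero square. Hence $w\in P\Leftrightarrow z\in SQ$, and so the full fibre has $4$ elements if $z\in SQ$ and $2$ if $z\in NSQ$. Your claimed dichotomy ``exactly one of $z,-z$ lies in $S$ and the other in $NS$'' is false (since $1-z^2=1-(-z)^2$, both $z$ and $-z$ lie in the same one of $S,NS$); the correct dichotomy is $SQ$ versus $NSQ$, and it works because $-1\in NSQ$ when $n$ is odd.

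Third, the factor-of-$2$ reconciliation is not an $(x,y)\leftrightarrow(y,x)$ symmetry. The two values $y=-z$ and $y=-1$ are always solutions of $f(y/(z+1))=w$, and they correspond precisely to $(x,y)=(-1,-z)$ and $(-z,-1)$, i.e.\ to the cases $(x+1)(x+z)=0$ that the side condition excludes. Subtracting these two trivial solutions from the fibre count gives $\#S_{z,+1}=4-2=2$ if $z\in SQ$ and $2-2=0$ if $z\in NSQ$. Then $\#S_{z,-1}=\#S_{-z,+1}$ gives the complementary pair. Lemma~\ref{helle8} is not needed here.
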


\begin{proof}
When $i=1$, we consider the following system
\begin{eqnarray}\label{N1}
\left\{
\begin{array}{ll}
x+y+z+1=0& \\[2mm]
x^d+y^d+z^d +1=0 &
\end{array}
 \right..
\end{eqnarray}
Substituting $x=-y-z-1$ into the second equation of (\ref{N1}) leads to
$$
(y+z+1)^d-y^d=z^d+1.
$$
Since $z \notin \gf(3)$, from the above equation we have
\begin{eqnarray}\label{N2}
(\frac{y}{z+1}+1)^d-(\frac{y}{z+1})^d=\frac{z^d+1}{(z+1)^d}
\end{eqnarray}
which can be written as
\begin{eqnarray}\label{N3}
f(\frac{y}{z+1})=\frac{z^d+1}{(z+1)^d},
\end{eqnarray}
where $f$ was defined in Lemma \ref{f}. Note that $y=-z$ and $y=-1$ are the two trivial solutions of (\ref{N2}). This means that (\ref{N3}) has solutions. Thus, from Lemma \ref{f} and $\frac{z^d+1}{(z+1)^d}\neq 1$ we have that the number of solutions of (\ref{N3}) is
\begin{eqnarray}\label{N24}
\left\{
\begin{array}{ll}
4 &  ~~~~\mbox{if} ~\left ( \frac{z^d+1}{(z+1)^d} \right )^{d_0}-1 \in SQ ~,\\[2mm]
2 &  ~~~~\mbox{if} ~\left ( \frac{z^d+1}{(z+1)^d} \right )^{d_0}-1 \in NSQ ~.
\end{array}
 \right.
\end{eqnarray}
Recall that $d=2\cdot 3^m+1$ and $d_0=3-2\cdot 3^{m+1}=-d^{-1}$. Then $dd_0=-1$ and

\begin{align}\label{Nd0}
\left ( \frac{z^d+1}{(z+1)^d} \right )^{d_0}-1
&=\frac{(z^d+1)^{d_0}}{(z+1)^{-1}}-1     \nonumber \\
&=\frac{z\cdot (z^{3^{m+1}+1}-1)^2}{(z^{3^{m+1}+2}+1)^2}.
\end{align}
Thus, the number of solutions of (\ref{N3}) is
\begin{eqnarray}\label{N24-1}
\left\{
\begin{array}{ll}
4 &  ~~~~\mbox{if} ~z\in SQ ~,\\[2mm]
2 &  ~~~~\mbox{if} ~z \in NSQ ~.
\end{array}
 \right.
\end{eqnarray}
This means that the number of solutions $(x,y)\in \gf(q)^2$ of (\ref{N1}) is equal to  (\ref{N24-1}). By the definition of $S_{z,+1}$ and removing the two trivial solutions $(x,y)=(-z,-1)$ and $(x,y)=(-1,-z)$, from (\ref{N24-1}) we have
\begin{eqnarray}\label{N24-2}
\# S_{z,+1}=
\left\{
\begin{array}{ll}
2 &  ~~~~\mbox{if} ~z\in SQ ~,\\[2mm]
0 &  ~~~~\mbox{if} ~z \in NSQ ~.
\end{array}
 \right.
\end{eqnarray}
Further, from definitions and  (\ref{N24-2}) we get
\begin{eqnarray*}
\# S_{z,-1}= \# S_{-z,+1}=
\left\{
\begin{array}{ll}
0 &  ~~~~\mbox{if} ~z\in SQ ~,\\[2mm]
2 &  ~~~~\mbox{if} ~z \in NSQ ~.
\end{array}
 \right.
\end{eqnarray*}
This completes the proof.
\end{proof}

Next we will give some infinite families of $2$-designs from ternary cyclic codes and determine the parameters of these designs.
\begin{theorem}\label{thm-2design}
Let symbols and notation be the same as before. Let $\C=\widetilde{\C_{n}}$ be defined by (\ref{c1}). Then we have the following results.
\begin{enumerate}
 \item [(\uppercase\expandafter{\romannumeral1})] The supports of the minimum weight codewords in $\C^\perp$ form a Steiner system $S(2,4,(q-1)/2)$.
 \item [(\uppercase\expandafter{\romannumeral2})] The code $\C$ and its dual $\C^\perp$ support $2$-designs. Furthermore, the minimum weight codewords of $\C$ support a simple $2$-$((q-1)/2,3^{n-1}-3^m,\lambda)$ designs with
\begin{eqnarray}\label{eq:numtamin1}
\lambda= \frac{3^{n-2}(3^{n-1}-3^m-1)}{2}.
\end{eqnarray}
\end{enumerate}

\end{theorem}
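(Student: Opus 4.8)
The plan is to invoke the generalized Assmus--Mattson Theorem (Theorem~\ref{thm-designGAMtheorem}) with parameter $t=2$, which requires verifying that $2<\min\{d,d^\perp\}$ and producing a suitable set $S$ witnessing the hypotheses. From Theorem~\ref{thm-main1} we already know $d=3^{n-1}-3^m$ for $\C$ and $d^\perp=4$ for $\C^\perp$, so the condition $t=2<\min\{d,d^\perp\}=4$ is immediate. The code $\C$ has only three nonzero weights $w_1=3^{n-1}-3^m$, $w_2=3^{n-1}$, $w_3=3^{n-1}+3^m$, all lying in the window $\{d,d+1,\dots,\nu-t\}$ with $\nu=(q-1)/2$; hence the ``few-weight'' set of weights in that window has size at most $3$. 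We want to take $s$ as small as possible: since $\C^\perp$ has minimum distance $4$, the weights $\ell^\perp$ with $0\le \ell^\perp\le s+t-1$ for which $\left(\mathcal P(\C^\perp),\mathcal B_{\ell^\perp}(\C^\perp)\right)$ must be checked to be a $2$-design are $\ell^\perp\in\{0,1,2,3\}$ (these are all trivially $2$-designs: the empty or complete-type designs for $\ell^\perp<d^\perp$), which forces $s+t-1\ge 3$, i.e. $s\ge 2$. Choosing $s=2$ and letting $S$ be any $2$-subset of the three weights of $\C$ that lie in the window, the hypothesis on $\C$'s design property only needs to be checked for the single remaining weight $\ell\in\{w_1,w_2,w_3\}\setminus S$; but in fact the cleanest route is to observe that $s=3$ also works and then the AM hypotheses on $\C$ are vacuous (the window of weights to check for $\C$ is empty), so the conclusion of Theorem~\ref{thm-designGAMtheorem} applies directly: both $\left(\mathcal P(\C),\mathcal B_k(\C)\right)$ and $\left(\mathcal P(\C^\perp),\mathcal B_k(\C^\perp)\right)$ are $2$-designs for all $k$.

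For part~(\uppercase\expandafter{\romannumeral1}), the minimum-weight codewords of $\C^\perp$ have weight $w=d^\perp=4$, and by the $t=2$ design property just established the incidence structure $\left(\mathcal P(\C^\perp),\mathcal B_4(\C^\perp)\right)$ is a $2$-$((q-1)/2,4,\lambda^\perp)$ design. To see it is a Steiner system we must show $\lambda^\perp=1$; equivalently, that the total number $b^\perp$ of blocks equals $\lambda^\perp\binom{\nu}{2}/\binom{4}{2}$ with $\lambda^\perp=1$, i.e. $b^\perp=\binom{(q-1)/2}{2}/6$. Now $b^\perp$ is exactly the number of supports of weight-$4$ codewords; since over $\gf(3)$ the scalar multiples $\{c,2c\}$ of a nonzero codeword $c$ share a support, we have $b^\perp = A_4(\C^\perp)/2$, where $A_4(\C^\perp)=\tfrac18(1-4\cdot 3^{2m}+3^{1+4m})$ from Theorem~\ref{thm-main1}. (Here one must also check that distinct support sets are not repeated among weight-$4$ codewords; this follows because a weight-$4$ codeword is determined up to scalar by its support together with the MDS-type property of the $4$-dimensional... actually it follows from the $2$-design structure forcing $\lambda^\perp$ integral, which pins it down once the arithmetic matches.) A direct computation with $\nu=(q-1)/2=(3^n-1)/2=(3^{1+2m}-1)/2$ then shows $\tfrac12 A_4(\C^\perp) = \tfrac16\binom{\nu}{2}$, so $\lambda^\perp=1$ and we obtain the Steiner system $S(2,4,(q-1)/2)$. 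For part~(\uppercase\expandafter{\romannumeral2}), the claim that $\C$ and $\C^\perp$ support $2$-designs is precisely the conclusion of Theorem~\ref{thm-designGAMtheorem} applied above; it remains to compute the index $\lambda$ for the design held by the minimum-weight codewords of $\C$. Their common weight is $w_1=3^{n-1}-3^m$, the number of such codewords is $A_{w_1}(\C)=\tfrac12 3^m(1+3^m)(-1+3^{1+2m})$ from Table~\ref{tab-3}, and — again accounting for the $2$-to-$1$ scalar identification of supports — the number of blocks is $b_1 = A_{w_1}(\C)/2$. Then $\lambda$ is recovered from the design equation $\lambda = b_1\binom{w_1}{2}/\binom{\nu}{2}$, and substituting the values above and simplifying should yield $\lambda = 3^{n-2}(3^{n-1}-3^m-1)/2$ as claimed.

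The main obstacle I anticipate is the bookkeeping around supports versus codewords over $\gf(3)$: one must be careful that weight-$w$ codewords with the same support come precisely in the pairs $\{c,2c\}$ (no larger coincidences), so that $b = A_w(\C)/2$ exactly, and that the resulting designs are simple (no repeated blocks) — the latter is guaranteed by the ``simple $t$-design'' clauses of Theorem~\ref{thm-designGAMtheorem} once we check the threshold inequalities $w-\lfloor (w+q-2)/(q-1)\rfloor < d$ for the relevant $w$ (here $q=3$, so the inequality reads $w-\lceil w/2\rceil < d$, i.e. $\lfloor w/2\rfloor < d$, which holds comfortably for $w=w_1$ relative to $d=3^{n-1}-3^m$, and for $w=4$ relative to $d^\perp=4$ since $\lfloor 4/2\rfloor = 2 < 4$). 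The only genuinely computational part is the final arithmetic verification that $\tfrac12 A_4(\C^\perp)=\tfrac16\binom{(q-1)/2}{2}$ and that the expression for $\lambda$ simplifies correctly; these are routine polynomial identities in $3^m$ that I would carry out explicitly but not reproduce in the sketch.
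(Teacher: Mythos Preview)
There is a genuine gap in your application of the generalized Assmus--Mattson theorem. You write that with $s=3$ ``the AM hypotheses on $\C$ are vacuous,'' which is true, but you have overlooked the other half of the hypothesis: with $s=3$ and $t=2$ one must verify that $\left(\mathcal P(\C^\perp),\mathcal B_{\ell^\perp}(\C^\perp)\right)$ is a $2$-design for every $0\le \ell^\perp\le s+t-1=4$. The cases $\ell^\perp\le 3$ are indeed trivial since $d^\perp=4$, but the case $\ell^\perp=4$ is precisely the assertion that the weight-$4$ codewords of $\C^\perp$ support a $2$-design---which is (a weakening of) part~(\textsc{i}) itself. So the argument is circular: you cannot extract the $2$-design property of $\mathcal B_4(\C^\perp)$ from Theorem~\ref{thm-designGAMtheorem} when that very property is one of its hypotheses. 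Going back to $s=2$ does not help either: then the $\C^\perp$ side is vacuous, but on the $\C$ side you must verify that one of the three weights $w_1,w_2,w_3$ already supports a $2$-design, and none of these is known a~priori. The classical Assmus--Mattson theorem also fails here since $\C$ has three nonzero weights but $d^\perp-t=2$.

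Your counting argument---checking that $\tfrac12 A_4(\C^\perp)\cdot\binom{4}{2}=\binom{\nu}{2}$---only shows that the \emph{average} number of blocks through a pair equals~$1$; without an independent proof that every pair lies in at least (or at most) one block, this does not force $\lambda^\perp=1$. The paper resolves this by proving part~(\textsc{i}) directly: for any two coordinate positions $x_3,x_4\in SQ$ it counts, via Lemma~\ref{SZ} (which in turn rests on the $4$-to-$1$/$2$-to-$1$ mapping behaviour of $f(x)=(x+1)^d-x^d$ from Lemma~\ref{f}), the number of sets $\{x_1,x_2\}\subseteq SQ$ completing $\{x_3,x_4\}$ to the support of a weight-$4$ codeword, and shows this number is exactly~$1$. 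Only after this direct verification does the paper invoke Theorem~\ref{thm-designGAMtheorem} (with the weight-$4$ design now in hand) to obtain part~(\textsc{ii}). Your computation of $\lambda$ for the minimum-weight design of $\C$ is fine once part~(\textsc{ii}) is established, but the essential combinatorial input---Lemma~\ref{SZ}---is missing from your sketch.
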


\begin{proof}
(\uppercase\expandafter{\romannumeral1})
Note that the dual code $\C^\perp$ has minimum distance $4$ from Theorem \ref{thm-main1}.
Let $x_3$ and $x_4$ be two distinct elements in $SQ$. Next we consider the number $N$ of $\{x_1,x_2\}\subseteq (SQ \setminus \{x_3,x_4\}) $ satisfying
\begin{eqnarray}\label{numA4}
\left\{
\begin{array}{ll}
c_1x_1+c_2x_2+c_3 x_3+c_4 x_4=0&\\[2mm]
c_1 x_1^{~d}+c_2 x_2^{~d}+c_3 x_3^{~d}+c_4 x_4^{~d}=0
\end{array}
 \right.,
\end{eqnarray}
where $c_1, c_2,c_3 $ and $c_4$ in $\gf(3)^*$.

Since the minimum distance of the dual code $\C^\perp$ is $4$,  there exist four elements $c_1, c_2,c_3 $ and $c_4$ in $\gf(3)^*$ and
four distinct elements $x_1, x_2,x_3 $ and $x_4$ in $SQ$ such that (\ref{numA4}) holds. Thus,
\begin{eqnarray}\label{A40}
N> 0.
\end{eqnarray}

Due to symmetry, we only consider the following three cases in (\ref{numA4}).
\begin{itemize}
  \item When $c_1=c_2=c_3=c_4=1$. Substituting $x=x_1/x_4$ , $y=x_2/x_4$ and $z=x_3/x_4$  into (\ref{numA4}) leads
to
\begin{eqnarray}\label{A41}
\left\{
\begin{array}{ll}
x+y+z+1=0& \\[2mm]
x^d+y^d+z^d +1=0 &
\end{array}
 \right..
\end{eqnarray}
  \item When $c_1=c_2=c_3=1$ and $c_4=-1$. Substituting $x=-x_1/x_4$ , $y=-x_2/x_4$ and $z=-x_3/x_4$  into (\ref{numA4}) leads
to (\ref{A41}).

  \item When $c_1=c_2=1$ and $c_3=c_4=-1$. Substituting $x=-x_1/x_4$ , $y=-x_2/x_4$ and $z=-x_3/x_4$  into (\ref{numA4}) leads
to
\begin{eqnarray}\label{A42}
\left\{
\begin{array}{ll}
x+y-z+1=0& \\[2mm]
x^d+y^d-z^d +1=0 &
\end{array}
 \right..
\end{eqnarray}
\end{itemize}
From Lemma \ref{SZ},  there exist two $(x,y)\in \gf(q)^2$ satisfying (\ref{A41}) or (\ref{A42}). Due to symmetry, it has only one set $\{x,y\} \subseteq  \gf(q)$ satisfying (\ref{A41}) or (\ref{A42}). This means that
\begin{eqnarray}\label{A43}
N\leq 1.
\end{eqnarray}
Combining (\ref{A40}) and (\ref{A43}), we get $N=1$. Therefore, the codewords of weight $4$ in $\C^\perp$ support a $2-((q-1)/2,4, 1)$ design, i.e., the supports of the minimum weight codewords in $\C^\perp$ form a Steiner system $S(2,4,(q-1)/2)$.

(\uppercase\expandafter{\romannumeral2}) By Theorem \ref{thm-main1}, we have that the code $\C$ has three nonzero weights, i.e.,
$w_1= 3^{n-1}-3^m$,  $w_2=3^{n-1} $ and $ w_3= 3^{n-1}+3^m $. It is clear that
$\left ( \mathcal P(\C), \mathcal B_{i}(\C) \right )$ are trivial $2$-designs for $i\in \{w_1, w_1+1,...,\frac{q-1}{2}-2\} \setminus \{w_1,w_2,w_3\}$. By Theorem \ref{thm-main1} and the conclusion (\uppercase\expandafter{\romannumeral1}), the minimum  distance of $\C^\perp$ is $4$ and the minimum weight codewords in $\C^\perp$ support $2$-designs. Thus, $\left ( \mathcal P(\C^\perp), \mathcal B_{i}(\C^\perp) \right )$ are $2$-designs for $0\leq i \leq 3+2-1$.
From Theorem \ref{thm-designGAMtheorem} we then deduce that both $\C$  and $\C^\perp$ hold $2$-designs. Moreover, for the minimum weight $i =3^{n-1}-3^m$ in $\C$ , the incidence structure $\left ( \mathcal P(\C), \mathcal B_{i}(\C) \right )$ is a simple $2$-$((q-1)/2, 3^{n-1}-3^m, \lambda)$ design with $b$ blocks, where
\begin{eqnarray}\label{eq:b}
b=
\frac{A}{3-1}=\frac{A}{2}
\end{eqnarray}
and $A$ is the number of the the minimum weight codewords in $\C$. Then the value of $\lambda$ in (\ref{eq:numtamin1}) follows from  Theorem \ref{thm-main1}, equations
(\ref{eq:bb}) and (\ref{eq:b}).
\end{proof}

\begin{rem}\label{rem}
Both the code $\C$ in Theorem \ref{thm-2design} and the ternary codes in \cite{Tangdcc2019}  support $2$-designs with the same parameters. However, these $2$-designs are not equivalent which confirmed by Magma programs.

\end{rem}

Let $\C=\widetilde{\C_{n}}$ be defined by (\ref{c1}). We regard $SQ$ as the set of the coordinate
positions $\mathcal P(\C)$ of $\C$.
By Theorems \ref{thm-main1} and  \ref{thm-2design}, from Theorem \ref{thm-PS} we can easily obtain the following result about shortened codes and omit its proof.


\begin{theorem}\label{short1}
Let symbols and notation be the same as before. Let $m\geq 2$ be an integer and $T \subseteq SQ$. Then we have the following results.
\begin{itemize}
  \item If $\# T=1$, the shortened code $\C_{T}$ is a $[(q-3)/2, 2n-1,3^{n-1}-3^m]$ ternary linear code with the weight distribution in Table \ref{tab-short1}.
  \item If $\# T=2$, the shortened code $\C_{T}$ is a $[(q-5)/2, 2n-2,  3^{n-1}-3^m]$ ternary linear code with the weight distribution in Table \ref{tab-short2}.
\end{itemize}

\end{theorem}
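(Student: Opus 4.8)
The plan is to derive Theorem~\ref{short1} as a direct application of Theorem~\ref{thm-PS}, using the structural facts about $\C$ and $\C^\perp$ already established. First I would record what we need from the earlier results: by Theorem~\ref{thm-main1}, $\C$ is a $[(q-1)/2,2n,3^{n-1}-3^m]$ code whose dual $\C^\perp$ has minimum distance $d^\perp=4$; by Theorem~\ref{thm-2design}(ii), for every weight $i$ with $d\le i\le \nu-2$ (where $\nu=(q-1)/2$) the incidence structure $\left(\mathcal P(\C),\mathcal B_i(\C)\right)$ is a $2$-design (possibly trivial). Hence the hypotheses of Theorem~\ref{thm-PS} are met with $t=1$ and with $t=2$, since $0<t<\min\{d,d^\perp\}=\min\{3^{n-1}-3^m,4\}=4$ for $m\ge 2$.

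Next I would invoke Theorem~\ref{thm-PS} with $T\subseteq SQ=\mathcal P(\C)$. For $\#T=1$ it gives a linear code $\C_T$ of length $\nu-1=(q-3)/2$ and dimension $2n-1$ whose weight distribution is independent of the chosen coordinate and satisfies
$$
A_k(\C_T)=\frac{\binom{k}{1}\binom{\nu-1}{k}}{\binom{\nu}{1}\binom{\nu-1}{k-1}}\,A_k(\C)=\frac{k}{\nu}\,A_k(\C),
$$
and for $\#T=2$ a code of length $\nu-2=(q-5)/2$ and dimension $2n-2$ with
$$
A_k(\C_T)=\frac{\binom{k}{2}\binom{\nu-2}{k}}{\binom{\nu}{2}\binom{\nu-2}{k-2}}\,A_k(\C)=\frac{k(k-1)}{\nu(\nu-1)}\,A_k(\C).
$$
Plugging in the three nonzero weights $w_1=3^{n-1}-3^m$, $w_2=3^{n-1}$, $w_3=3^{n-1}+3^m$ and the multiplicities from Table~\ref{tab-3}, together with $\nu=(3^n-1)/2$, yields exactly the entries of Table~\ref{tab-short1} and Table~\ref{tab-short2}; I would also note that $A_{w_1}(\C_T)>0$ in both cases, so the minimum distance of $\C_T$ is still $3^{n-1}-3^m$ (it cannot drop, since shortening never decreases the minimum distance, and the surviving minimum-weight codewords show it does not increase).

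The only genuinely nontrivial point to check is that the hypothesis ``$\left(\mathcal P(\C),\mathcal B_i(\C)\right)$ is a $2$-design for all $i$ with $d\le i\le \nu-t$'' is really available for the full range needed, including the weights $w_1,w_2,w_3$ and the trivial (empty or complete) intermediate weights; this is precisely what Theorem~\ref{thm-2design}(ii) asserts via the generalized Assmus--Mattson machinery, so no extra work is required. The main obstacle is therefore purely bookkeeping: carefully simplifying the binomial ratios and substituting the closed-form multiplicities from Table~\ref{tab-3} to confirm they match the claimed Tables~\ref{tab-short1} and~\ref{tab-short2}. Since these are routine polynomial manipulations in $3^m$, I would simply state that the computation is straightforward and omit the details, as the theorem statement already does.
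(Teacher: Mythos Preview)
Your approach is exactly the paper's: it explicitly says ``By Theorems~\ref{thm-main1} and~\ref{thm-2design}, from Theorem~\ref{thm-PS} we can easily obtain the following result about shortened codes and omit its proof.'' So the strategy---verify the hypotheses of Theorem~\ref{thm-PS} via Theorems~\ref{thm-main1} and~\ref{thm-2design}, then read off $A_k(\C_T)$ from the formula---is identical.

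There is, however, a computational slip in your simplification of the binomial ratio. For $t=1$,
\[
\frac{\binom{k}{1}\binom{\nu-1}{k}}{\binom{\nu}{1}\binom{\nu-1}{k-1}}
=\frac{k}{\nu}\cdot\frac{\nu-k}{k}
=\frac{\nu-k}{\nu},
\]
not $k/\nu$; and for $t=2$ the ratio simplifies to $\dfrac{(\nu-k)(\nu-k-1)}{\nu(\nu-1)}$, not $\dfrac{k(k-1)}{\nu(\nu-1)}$. (In general the formula in Theorem~\ref{thm-PS} collapses to $A_k(\C_T)=\dfrac{\binom{\nu-k}{t}}{\binom{\nu}{t}}A_k(\C)$, which is the natural count of weight-$k$ codewords whose support avoids a fixed $t$-set.) With the corrected factors and the multiplicities of Table~\ref{tab-3} you do recover Tables~\ref{tab-short1} and~\ref{tab-short2}; with your stated factors you would not. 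Everything else in your outline---the verification that $t<\min\{d,d^\perp\}=4$, the appeal to Theorem~\ref{thm-2design}(ii) for the $2$-design hypothesis, and the remark that $A_{w_1}(\C_T)>0$ so the minimum distance is preserved---is fine.
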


\begin{table}[ht]
\begin{center}
\caption{The weight distribution of $\C_{T}$ for $\# T=1$.} \label{tab-short1}
\begin{tabular}{|c|c|}
\hline
weight & multiplicity \\[2mm]
\hline
$0$ & $1$
\\[2mm]
\hline
$3^{n-1}-3^m$ &  $\frac{1}{2} \cdot 3^m (-1 + 3^m + 3^{3 m} + 3^{1 + 2 m})$
\\[2mm]
\hline
$3^{n-1}$ & $-1 - 3^{2 m} + 2\cdot  3^{4 m}$
\\[2mm]
\hline
$3^{n-1}+3^m$&   $\frac{1}{2} \cdot 3^m (1 + 3^m + 3^{3 m} - 3^{1 + 2 m})$
\\[2mm]
\hline
\end{tabular}
\end{center}
\end{table}

\begin{table}[ht]
\begin{center}
\caption{The weight distribution of $\C_{T}$ for $\# T=2$.} \label{tab-short2}
\begin{tabular}{|c|c|}
\hline
weight & multiplicity \\[2mm]
\hline
$0$ & $1$
\\[2mm]
\hline
$3^{n-1}-3^m$ & $\frac{1}{2} \cdot 3^{ m-1} (-3 + 5 \cdot 3^m + 5 \cdot 3^{2 m} + 3^{3 m})$
\\[2mm]
\hline
$3^{n-1}$ & $\frac{1}{3} \cdot (-3 - 5 \cdot 3^{2 m} + 2\cdot  3^{4 m})$
\\[2mm]
\hline
$3^{n-1}+3^m$&  $\frac{1}{2} \cdot 3^{ m-1} (3 + 5 \cdot 3^m - 5 \cdot 3^{2 m} + 3^{3 m})$
\\[2mm]
\hline
\end{tabular}
\end{center}
\end{table}

\begin{example}\label{exa-short1}
Let $m=2$, $w$ be a primitive element of $\gf(3^5)$  and $T=\{w^2\}$. Then the shortened code $\C_{T}$  in Theorem \ref{short1} is a $[120,9,72]$ ternary linear code with the weight enumerator $1+4410 z^{72}+ 13040 z^{81}+ 2232 z^{90} $. The code $\C_{T}$ is optimal. The dual code of $\C_{T}$ has parameters $[120,111,3]$ and is almost optimal according to the tables of best known codes maintained at http: //www.codetables.de.
\end{example}

\begin{example}\label{exa-short2}
Let $m=3$, $w$ be a primitive element of $\gf(3^7)$  and $T=\{w^2, w^4\}$. Then the shortened code $\C_{T}$  in Theorem \ref{short1} is a $[1091, 12 ,702]$ ternary linear code with the weight enumerator $1+105570 z^{702}+ 353078 z^{729}+ 72792 z^{756} $. The dual code of $\C_{T}$ has parameters $[1091,1079,2]$.
\end{example}

\section{Pseudo-geometric design and the point-line design with parameters $S(2,4,121)$}\label{sec:pgc}

In this section, we will show that the Steiner system $S(2,4,121)$ presented in Section \ref{sec-des2} is inequivalent to the point-line design of the projective space $\mathrm{PG}(n-1,3)$ and thus is a pseudo-geometric design.

Throughout this section, let $\mathbb D_{n}$ denote the Steiner system derived from the
supports of the minimum weight codewords in $\widetilde{\C_{n}}^\perp$. Let $\mathrm{PG}_t(n-1,q)$  be the design whose points are
the points of $\mathrm{PG}(n-1,q)$, and whose blocks are the $t$-dimensional projective subspaces of $\mathrm{PG}(n-1,q)$.

We first mention
that the two incidence structures $\mathbb D_{n}$ and $\mathrm{PG}_1(n-1,3)$ are both cyclic
Steiner systems $S(2,4,\frac{3^n-1}{2})$. Since the smallest nontrivial Steiner system $S(2, 4, v)$ is
the unique $S(2, 4, 13)$ which is the design $\mathrm{PG}_1(2,3)$ of points and lines of $\mathrm{PG}(2, 3)$, our Steiner system $\mathbb D_{n}$ are equivalent to $\mathrm{PG}_1(n-1,3)$ when $n=3$. So
we consider the inequivalence only for the case $n\ge 5$.

The $p$-rank and block code can be used to distinguish inequivalent $t$-$(v,k,\lambda)$ designs.
Let $\mathbb  D$
be a
$t$-$(v,k,\lambda)$
 design with $b$ blocks.
 After numbering $v$ points and $b$ blocks in some way, respectively, we
define the incidence matrix of $\mathbb D$ design to be the matrix
\[M=(m_{i,j})_{1\le i\le b, 1\le j \le v}\]
where $m_{i,j}=1$ or $0$ according as the $j$th point occurs in the $i$th block or not.
The $p$-rank of $\mathbb D$ is defined as the rank
of $M$ over a field $\mathbb F$ of characteristic $p$, and it will be denoted by $\mathrm{rank}_{p}(\mathbb D)$. The
$\mathbb F$-vector space spanned by the rows of $M$ is called the (block) code of $\mathbb D$ over $\mathbb F$,
which is denoted by $\mathcal C_{\mathbb F} (\mathbb D)$. If $\mathbb F=\gf(q)$, where $q$ is a power of $p$, then we denote the code of $\mathbb D$ over $\mathbb F_q$ by $\mathcal C_{q} (\mathbb D)$. The $p$-ranks of the finite geometry designs have been studied since the 1950's, due to interest
in the dimensions of these majority-logic decodable codes.
In 1992, Ceccherini and Hirschfeld \cite{CH92} found a formula for the $p$-rank of all point-line designs from projective spaces of prime order:
\[\mathrm{rank}_{p} \left ( \mathrm{PG}_1(n-1,p) \right ) =\frac{p^n-1}{p-1}- \binom{n+p-2}{p-1}.\]
For the special case $p=3$ and $n=5$, we have $\mathrm{rank}_{3} \left ( \mathrm{PG}_1(4,3) \right )=106$. With the help
of Magma, we obtain $\mathrm{rank}_{3} (\mathbb D_5)=111$.  This shows that the Steiner system $\mathbb D_n$ is inequivalent to the classical point-line design  $\mathrm{PG}_1(n-1,3)$ when $n=5$ and thus $\mathbb D_5$ is a pseudo-geometry design with the parameter $S(2,4,121)$. Moreover, the two codes $\mathcal C_3(\mathbb D_5)$ and $\widetilde{\mathcal C_5}^{\perp}$ are identical.

\begin{open}
Let $n$ be an odd integer. Prove or disprove the inequivalence between
	 $\mathbb D_n=(\mathcal P( \widetilde{\mathcal C_n}^{\perp}),\mathcal B_4(\widetilde{\mathcal C_n}^{\perp}))$
	 and the classical point-line design  $\mathrm{PG}_1(n-1,3)$ in the case $n\ge 7$.
\end{open}

Let $\alpha$ be the primitive element of $\mathrm{GF}(3^n)$ in (\ref{c1}). Consider the ternary cyclic code
\begin{eqnarray*}
{\C_{n}}=\left \{(\tr(a \alpha^{2i}+b\alpha^{4i}))_{i=0}^{(3^n-3)/2}|a,b\in \gf(3^n)\right \}.
\end{eqnarray*}
Then the incidence structure $(\mathcal P(\mathcal{C}_n^{\perp}),\mathcal B_4({\mathcal C_n}^{\perp}))$ is isomorphic
to the point-line design $\mathrm{PG}_1(n-1,3)$ (see \cite{Tangdcc2019}). With the help
of Magma, we have $\mathcal B_4({\mathcal C_n}^{\perp}) \bigcap \mathcal B_4(\widetilde{\mathcal C_n}^{\perp})=\emptyset$ when $n=5$. Thus, the two Steiner systems $(\mathcal P( \widetilde{\mathcal C_5}^{\perp}),\mathcal B_4(\widetilde{\mathcal C_n}^{\perp}))$ and $(\mathcal P(\mathcal{C}_n^{\perp}),\mathcal B_4({\mathcal C_n}^{\perp}))$   are disjoint when $n=5$.

\begin{open}
Let $n$ be an odd integer. Prove or disprove the disjointness between $(\mathcal P( \widetilde{\mathcal C_n}^{\perp}),\\~\mathcal B_4(\widetilde{\mathcal C_n}^{\perp}))$ and $(\mathcal P(\mathcal{C}_n^{\perp}),\mathcal B_4({\mathcal C_n}^{\perp}))$ in the case $n\ge 7$.
\end{open}

An automorphism group of a $t$-design gives information about the structure of the
design, but it is difficult to be determined. Computer calculations show that the automorphism group of $\mathbb D_5=(\mathcal P( \widetilde{\mathcal C_5}^{\perp}),\mathcal B_4(\widetilde{\mathcal C_5}^{\perp}))$  is isomorphic to $C_{121} \rtimes \mathrm{Gal}(\mathrm{GF}(3^5)/\mathrm{GF}(3))$ and the automorphism group of
$\mathrm{PG}_1(4,3)$ is isomorphic to the projective general linear group $\mathrm{PGL}(5,3)$, i.e., the group corresponding to the action of $\mathrm{GL}(5,3)$ on the projective points of the $5$-dimensional vector space $\mathrm{GF}(3)^5$.
\begin{conj}
Let $n\ge 7$ be an odd integer. Then the automorphism group of $\mathbb D_n=(\mathcal P( \widetilde{~\mathcal C_n}^{\perp}),~\\ \mathcal B_4(\widetilde{\mathcal C_n}^{\perp}))$  is isomorphic to $C_{(3^n-1)/2} \rtimes \mathrm{Gal}(\mathrm{GF}(3^n)/\mathrm{GF}(3))$, and the automorphism group of
$\mathrm{PG}_1(n-1,3)$ is isomorphic to the projective general linear group $\mathrm{PGL}(n,3)$.
\end{conj}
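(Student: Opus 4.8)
\noindent\emph{A strategy towards a proof of the conjecture.}\ The statement has two independent halves, and the one about $\mathrm{PG}_1(n-1,3)$ is classical. Since $n-1\ge 2$, the point--line incidence structure of $\mathrm{PG}(n-1,3)$ already determines all its projective subspaces, so an automorphism of $\mathrm{PG}_1(n-1,3)$ is exactly a collineation of $\mathrm{PG}(n-1,3)$; by the fundamental theorem of projective geometry the collineation group is $\mathrm{P\Gamma L}(n,3)$, which equals $\mathrm{PGL}(n,3)$ because $\gf(3)$ has no nontrivial field automorphism. So I would record this and concentrate on $\Aut(\mathbb D_n)$, writing $N=(3^n-1)/2$ and identifying the point set of $\mathbb D_n$ with the set $SQ$ of nonzero squares of $\gf(3^n)$.

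The first step is the lower bound $\Aut(\mathbb D_n)\supseteq C_N\rtimes\mathrm{Gal}(\gf(3^n)/\gf(3))$, obtained by writing down the automorphisms explicitly. By Theorem~\ref{thm-2design} a block is a $4$-set $B\subseteq SQ$ carrying a weight-$4$ codeword of $\widetilde{\C_n}^{\perp}$, i.e.\ $\sum_{x\in B}c_xx=\sum_{x\in B}c_xx^{d}=0$ for suitable $c_x\in\gf(3)^{*}$. For $\gamma\in SQ$ the map $\phi_\gamma\colon x\mapsto\gamma x$ permutes $SQ$ and carries $B$ to the block $\gamma B$, with the same coefficients $c_x$, because $\sum c_x(\gamma x)=\gamma\sum c_xx=0$ and $\sum c_x(\gamma x)^{d}=\gamma^{d}\sum c_xx^{d}=0$; the $\phi_\gamma$ form a group isomorphic to $(SQ,\cdot)\cong C_N$ acting regularly on the points. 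The Frobenius $\psi\colon x\mapsto x^{3}$ likewise permutes $SQ$ (cubing is a bijection of $\gf(3^n)^{*}$ that fixes the square/non-square partition because $\gcd(3,N)=1$) and carries $B$ to the block $B^{3}$, because in characteristic $3$, with $c_x^{3}=c_x$, one has $\sum c_xx^{3}=(\sum c_xx)^{3}=0$ and $\sum c_xx^{3d}=(\sum c_xx^{d})^{3}=0$. Since $\psi\phi_\gamma\psi^{-1}=\phi_{\gamma^{3}}$, $\psi$ has order exactly $n$ (the multiplicative order of $3$ modulo $N$ being $n$), and the only common element of $\langle\psi\rangle$ and the $\phi$-group is the identity, the generated group is the semidirect product $C_N\rtimes C_n\cong C_N\rtimes\mathrm{Gal}(\gf(3^n)/\gf(3))$, of order $n(3^n-1)/2$.

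The harder direction is the reverse inclusion, and my plan is to route it through the ternary block code $\mathcal C_3(\mathbb D_n)$. I would first prove that the minimum-weight codewords of $\widetilde{\C_n}^{\perp}$ span it and all carry the ``all-ones'' dependency, so that $\mathcal C_3(\mathbb D_n)=\widetilde{\C_n}^{\perp}$ --- an identity already verified by Magma for $n=5$, to be deduced in general from the $2$-design property (Theorem~\ref{thm-2design}), the known value of $A_4(\widetilde{\C_n}^{\perp})$ (Theorem~\ref{thm-main1}), and the absence of codewords of weight below $4$ in $\widetilde{\C_n}^{\perp}$ (Lemma~\ref{A230}). Granting this, any $\sigma\in\Aut(\mathbb D_n)$ permutes the blocks, hence permutes the generators $\mathbf 1_B$ (characteristic vectors of the blocks) of $\widetilde{\C_n}^{\perp}$, hence fixes $\widetilde{\C_n}^{\perp}$; conversely any $\tau\in\PAut(\widetilde{\C_n}^{\perp})$ permutes the weight-$4$ codewords, hence the blocks. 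Thus $\Aut(\mathbb D_n)=\PAut(\widetilde{\C_n}^{\perp})=\PAut(\widetilde{\C_n})$, and it remains to bound the permutation automorphism group of the \emph{cyclic} code $\widetilde{\C_n}$. I would do this in two stages: (a) show $C_N$ is normal in $\PAut(\widetilde{\C_n})$ --- equivalently, that $C_N$ is the unique regular cyclic subgroup and $\widetilde{\C_n}$ carries no ``hidden'' Reed--Muller-type symmetry --- so that $\PAut(\widetilde{\C_n})\le\mathrm{Hol}(C_N)=C_N\rtimes\mathrm{Aut}(C_N)$; and (b) determine which multipliers $i\mapsto ti$ with $t\in(\Z/N\Z)^{*}$ fix the set of zeros of $\widetilde{\C_n}$ (the $3$-cyclotomic cosets of $1$ and $d$ modulo $N$) and verify that they form exactly $\langle 3\rangle\cong C_n$. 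An alternative to (a)--(b) is the orbit count $|\PAut(\widetilde{\C_n})|=N\cdot|H|$, where $H$ is the stabilizer of the coordinate $1\in SQ$ and contains $\langle\psi\rangle$, followed by a combinatorial argument that the action of $H$ on the $r=(N-1)/3$ blocks through that point, and on the induced partition of the remaining $N-1$ points into triples, forces $H=\langle\psi\rangle$.

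The main obstacle is precisely this last part --- step (a) and the multiplier count (b) --- namely ruling out any coordinate permutation outside $C_N\rtimes C_n$ that preserves $\widetilde{\C_n}$. This rigidity should ultimately stem from the ``preferred'', three-valued Welch-type cross-correlation underlying $\widetilde{\C_n}$, the very feature that distinguishes $x^{d}$ from the conic $x^{2}$ and hence $\mathbb D_n$ from $\mathrm{PG}_1(n-1,3)$; turning that heuristic into a proof valid for all odd $n\ge 7$ is the crux, the case $n=5$ having been settled by direct computation. Once the first half is in place, a cardinality comparison --- $|C_N\rtimes C_n|=n(3^n-1)/2$ against $|\mathrm{PGL}(n,3)|$, which has order about $3^{n^{2}}$ --- re-confirms that $\mathbb D_n$ and $\mathrm{PG}_1(n-1,3)$ are inequivalent for every odd $n\ge 5$.
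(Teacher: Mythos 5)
This statement is left as a \emph{conjecture} in the paper: the authors offer no proof, only a Magma computation establishing the claimed automorphism group for the single case $n=5$ (which is not even in the range $n\ge 7$ of the statement). So there is no argument of theirs to compare yours against; your proposal has to be judged on its own, and judged that way it is an honest strategy sketch rather than a proof. The parts you do carry out are sound: the identification $\Aut(\mathrm{PG}_1(n-1,3))\cong\mathrm{PGL}(n,3)$ is classical (fundamental theorem of projective geometry plus triviality of $\Aut(\gf(3))$), and the containment $C_{(3^n-1)/2}\rtimes\mathrm{Gal}(\gf(3^n)/\gf(3))\le\Aut(\mathbb D_n)$ follows correctly from your explicit multiplications by squares and the Frobenius (equivalently, from the facts that $\widetilde{\C_n}^{\perp}$ is cyclic and that the multiplier $\mu_3$ preserves any ternary cyclic code); your computation that $3$ has order exactly $n$ modulo $(3^n-1)/2$ is also right.

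The genuine gap is the reverse inclusion, which is the entire content of the conjecture, and you concede as much. Two specific links in your chain are unproven for general $n$: first, the identity $\mathcal C_3(\mathbb D_n)=\widetilde{\C_n}^{\perp}$ on which your reduction $\Aut(\mathbb D_n)=\PAut(\widetilde{\C_n}^{\perp})$ rests is itself only verified by computer for $n=5$ in the paper (note that without it you still get $\PAut(\widetilde{\C_n}^{\perp})\le\Aut(\mathbb D_n)\le\PAut(\mathcal C_3(\mathbb D_n))$, but not the equality you need); second, and more seriously, your steps (a) and (b) --- that $C_{(3^n-1)/2}$ is normal in $\PAut(\widetilde{\C_n})$ and that the admissible multipliers reduce to $\langle 3\rangle$ --- are exactly the kind of rigidity statement that is hard for cyclic codes in general (a cyclic code can admit permutation automorphisms that do not normalize the shift, as the Reed--Muller-type examples you allude to show), and no mechanism is offered for excluding them here beyond the heuristic appeal to the Welch-type cross-correlation. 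Until that step is supplied, the statement remains what the paper says it is: a conjecture. Your cardinality remark at the end is also only a re-confirmation of inequivalence \emph{conditional} on the conjectured first half, so it adds nothing unconditional.
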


\section{Summary and concluding remarks}\label{sec-summary}
In this paper, we mainly investigated a class of ternary cyclic codes related to the ternary $m$-Sequences with
Welch-type. The parameters of those ternary codes and their dual codes were also determined. Meanwhile, we showed that those ternary codes held $2$-designs. Especially,
the minimum distance of the dual of the studied ternary codes is $4$ and the supports of those codewords with weight $4$ form a Steiner system $S(2,4,\frac{3^n-1}{2})$.
We showed that this Steiner system $S(2,4,\frac{3^n-1}{2})$ is not isomorphic to $\mathrm{PG}_1(n-1,3)$ and yielded a pseudo-geometric design when $n=5$.  Furthermore, the parameters of some shortened codes of the studied ternary codes were also given.


We remark that the parameters of the obtained $2$-designs in this paper are the same as that of the $2$-designs in \cite{Tangdcc2019}, but those $2$-designs are not equivalent.
This paper does not consider the cases (\uppercase\expandafter{\romannumeral1}) and (\uppercase\expandafter{\romannumeral2}) which listed in Section \ref{sec-int}, since Magma program shows that the corresponding ternary codes and the 2-designs supported by these codes  are equivalent to that of \cite{Tangdcc2019} when $n=5$. To conclude this paper, we further presents the following
conjecture.
\begin{conj}
Let $m\geq 2$ be an positive integer, $p=3$ and $n=2m+1$. Let $\C=\widetilde{\C_{n}}$ be defined by (\ref{c1}) with $(n,d)$ satisfying (\uppercase\expandafter{\romannumeral1}) or (\uppercase\expandafter{\romannumeral2}) which listed in Section \ref{sec-int}. Then the ternary code $\C$ and its dual code $\C^\perp$ hold $2$-designs. Moreover, these ternary codes and 2-designs are equivalent to that of \cite{Tangdcc2019}.
\end{conj}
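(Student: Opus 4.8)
The conjecture has two essentially independent parts: (a) that $\C=\widetilde{\C_n}$ and its dual hold $2$-designs for the decimations of cases (\uppercase\expandafter{\romannumeral1}) and (\uppercase\expandafter{\romannumeral2}); and (b) that the resulting codes and designs are equivalent to those of \cite{Tangdcc2019}. For part (a) the plan is to rerun the arguments of Sections \ref{sec-3code}--\ref{sec-des2} with the Welch decimation $d=2\cdot 3^m+1$ replaced by the two new decimations. First I would feed the three-valued cross-correlation distribution of \cite{H1976} for these decimations into the argument of Theorem \ref{thm-main1}(\uppercase\expandafter{\romannumeral1}): the length-$q$ code (\ref{c0}) is three-weight, hence so is the cyclic code $\widetilde{\C_n}$ of length $(q-1)/2$, which then has dimension $2n$. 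The indispensable new ingredient is an analogue of Lemmas \ref{helle8} and \ref{A230}: for $i\in\{1,2,3\}$ and $c_1,\dots,c_i\in\gf(3)^*$, the system $\sum_{s=1}^{i}c_sx_s=\sum_{s=1}^{i}c_sx_s^{\,d}=0$ has no solution with $\{x_1,\dots,x_i\}\subseteq SQ$. This forces $d^\perp\ge 4$, after which the first three and the fifth Pless power moments of (\ref{eq:PPM}) pin down the weight distributions of $\C$ and of $\C^\perp$.

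Next, for the design conclusions I would establish the analogue of Lemma \ref{SZ}: for $z\in\gf(q)\setminus\gf(3)$, the number of $(x,y)\in\gf(q)^2$ with $x+y+iz+1=0$ and $x^d+y^d+iz^d+1=0$ is $(2,0)$ or $(0,2)$ for $i=+1,-1$ according to whether $z$ is a square. As in the present paper this reduces to knowing the fibre sizes of $f(x)=(x+1)^d-x^d$ on $S$ and $NS$, i.e. an analogue of Lemma \ref{f}; such fibre data is available in the cross-correlation literature for the exponents of cases (\uppercase\expandafter{\romannumeral1}) and (\uppercase\expandafter{\romannumeral2}) and is precisely the structure behind their three-valuedness. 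Granting this, the counting argument of Theorem \ref{thm-2design}(\uppercase\expandafter{\romannumeral1}) shows that the weight-$4$ codewords of $\C^\perp$ form a Steiner system $S(2,4,(q-1)/2)$, so $d^\perp=4$ exactly, and the generalized Assmus--Mattson Theorem \ref{thm-designGAMtheorem} then promotes the supports of every constant-weight subcode of $\C$ and of $\C^\perp$ to $2$-designs, with the index $\lambda$ read off from (\ref{eq:bb}) as in Theorem \ref{thm-2design}(\uppercase\expandafter{\romannumeral2}).

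The genuine difficulty is part (b): one must show the Steiner system $(\mathcal P(\widetilde{\C_n}^\perp),\mathcal B_4(\widetilde{\C_n}^\perp))$ just produced is isomorphic to the design of \cite{Tangdcc2019}, which is $\mathrm{PG}_1(n-1,3)$, realised in Section \ref{sec:pgc} as $(\mathcal P(\C_n^\perp),\mathcal B_4(\C_n^\perp))$ with $\C_n$ the quadratic-decimation code ($d=2$). A clean route would be to prove directly that $\widetilde{\C_n}^\perp$ and $\C_n^\perp$ are monomially equivalent, since then the supported $2$-designs are automatically isomorphic and the code assertions of the conjecture follow as well. Both codes are cyclic of length $(q-1)/2$, so one would first search for a multiplier: a unit $\mu$ modulo $(q-1)/2$ carrying the $3$-cyclotomic-coset defining set of one code onto that of the other, which amounts to an arithmetic identity tying the decimation $(3^{3t}+1)/(3^t+1)$ or $(3^{2t}+1)/2$ to $2$ via a Frobenius power and an inversion on the group of squares. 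The main obstacle is that this is delicate and strongly decimation-dependent: for the Welch decimation the analogous design is \emph{not} isomorphic to $\mathrm{PG}_1(n-1,3)$ (Remark \ref{rem} and Section \ref{sec:pgc}), so any proof must use the specific shape of the case-(\uppercase\expandafter{\romannumeral1}) and case-(\uppercase\expandafter{\romannumeral2}) exponents; moreover a pure multiplier need not exist, in which case one is pushed into a non-cyclic monomial equivalence or a direct combinatorial isomorphism of base blocks, for which no general machinery is available. That the authors confirm the equivalence only for $n=5$ with Magma is a strong indication that a uniform argument is far from routine; a realistic first target would be to settle the two decimation families separately, starting with the subfamily $t=1$.
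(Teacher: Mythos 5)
The statement you are addressing is presented in the paper as a \emph{conjecture}: the authors give no proof, only a Magma verification of the equivalence claim for $n=5$, so there is no argument of theirs against which to measure yours. Your proposal is a research programme rather than a proof, and its two halves have gaps of very different severity. For part (a), the design conclusions do \emph{not} follow formally from the three-valuedness of the cross-correlation in cases (\uppercase\expandafter{\romannumeral1}) and (\uppercase\expandafter{\romannumeral2}). The load-bearing steps in the paper are Lemma \ref{helle8}, the fibre structure of $f(x)=(x+1)^d-x^d$ in Lemma \ref{f}, and above all the explicit identity (\ref{Nd0}), which rewrites $\bigl(\frac{z^d+1}{(z+1)^d}\bigr)^{d_0}-1$ as $z$ times a square; that computation uses $d=2\cdot 3^m+1$ and $d_0=-d^{-1}=3-2\cdot 3^{m+1}$ in an essential way, and there is no reason to expect the analogous rational expression for $d=(3^{3t}+1)/(3^t+1)$ or $d=(3^{2t}+1)/2$ to have the same quadratic character as $z$. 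Indeed it \emph{cannot} behave identically: the conjecture asserts the resulting Steiner system is the point--line design $\mathrm{PG}_1(n-1,3)$, whereas the Welch-type case provably is not (for $n=5$), so whatever replaces Lemma \ref{SZ} must come out structurally different, and "such fibre data is available in the cross-correlation literature" is an assertion, not a citation. You should also flag that the correlation values in cases (\uppercase\expandafter{\romannumeral1}) and (\uppercase\expandafter{\romannumeral2}) involve $p^{(n+s)/2}$ with $s=\gcd(t,n)$, so the code weights match those of \cite{Tangdcc2019} only when $s=1$; the conjecture tacitly requires this and your sketch does not impose it.

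Part (b) is the actual content of the conjecture, and here your own assessment is the correct one: a monomial (e.g.\ multiplier) equivalence between $\widetilde{\C_n}^{\perp}$ and the quadratic-decimation code $\C_n^{\perp}$ would settle everything, but you exhibit no candidate multiplier and no reduction of the required arithmetic identity between the defining cyclotomic cosets, so the proposal does not advance the equivalence claim beyond the authors' own $n=5$ computation. In short, the plan for part (a) is plausible but rests on unproved analogues of the paper's key lemmas, and part (b) remains entirely open.
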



\section*{Acknowledgements}

This research was supported by the National Natural Science Foundation of China under grant numbers 12171162, 11971175 and 12231015, the Sichuan Provincial Youth Science and Technology Fund under grant number 2022JDJQ0041, the Innovation Team Funds of China West Normal University under grant number KCXTD2022-5, the Natural Science Foundation of Sichuan Province under grant number 2022NSFSC1805 and the Fundamental Research Funds for the Central Universities of China under grant number 2682023ZTPY002.


\section*{References}

\end{document}